\DeclarePairedDelimiterX\set[1]\lbrace\rbrace{#1}
\newtheorem{theorem}{Theorem}
\newtheorem{corollary}{Corollary}[theorem]
\newtheorem{proposition}{Proposition}
\theoremstyle{definition}
\newtheorem{definition}{Definition}
\theoremstyle{remark}
\newtheorem*{remark}{Remark}
\begin{document}

\title{Transport measurements of majorization order for wave coherence}

\author{Cheng Guo}
\email{guocheng@stanford.edu}

\author{David A. B. Miller}

\author{Shanhui Fan}
\email{shanhui@stanford.edu}
\affiliation{
Ginzton Laboratory and Department of Electrical Engineering, Stanford University, Stanford, California 94305, USA
}%

\date{\today}

\begin{abstract}
We investigate the majorization order for comparing wave coherence and reveal its fundamental consequences in transport measurements, including power distribution, absorption, transmission, and reflection. We prove that all these measurements preserve the majorization order under unitary control, enabling direct experimental characterization of the majorization order. Specifically, waves with lower coherence in the majorization order exhibit more restricted ranges of achievable measurement values.  Our results deepen the understanding of coherence in transport phenomena.
\end{abstract}
\maketitle

\clearpage


Wave coherence originates from the statistical properties of random fluctuations~\cite{born1999,goodman2000,mandel1995,oneill2003} and plays a crucial role in fundamental phenomena like interference, diffraction, and scattering~\cite{glauber1963a,mandel1965,perina1985,korotkova2022}. Coherence theory examines how coherence affects observables~\cite{wolf2007}. A fundamental issue in coherence theory is the comparison of coherence between different waves. The concept of ``degree of coherence" can be formalized through various measures, each with specific applications and limitations. For instance, von Laue's entropy measure~\cite{laue1907,gamo1964} has clear thermodynamic significance but is coarse due to its scalar nature~\cite{mandel1965}. Other measures addressing different aspects of coherence were proposed by Zernike~\cite{zernike1938}, Glauber~\cite{glauber1963a}, Mandel and Wolf~\cite{mandel1995}, among others~\cite{gamo1957, karczewski1963a,perina1985,tervo2003,refregier2005,setala2006,luis2007}.

Quantum resource theories have advanced coherence theory~\cite{chitambar2019,torun2023,streltsov2017,baumgratz2014,winter2016}, introducing a new coherence measure based on the majorization order~\cite{nielsen1999,gour2015,luis2016,bengtsson2017,gour2018}. This measure offers finer granularity, clear algebraic and geometric interpretations, and computational simplicity~\cite{marshall2011}. However, its unique physical implications, especially for classical waves, remain unclear. Certainly, any coherence measure, including majorization order, can be indirectly inferred from density matrix tomography~\cite{abouraddy2014,kagalwala2015,roques-carmes2024}. However, direct measurements specific to the majorization order effects are yet to be established. 

In this paper, we reveal the fundamental consequences of the majorization order in transport measurements, including power distribution, absorption, transmission, and reflection. We demonstrate that these measurements, under unitary control (i.e., unitary transformations of the input wave), precisely preserve and manifest the majorization order, distinguishing it from other coherence measures. Consequently, these effects enable direct experimental characterization of the majorization order. Our findings highlight the crucial role of the majorization order in transport phenomena and coherence theory.

We begin by reviewing the density matrix formalism of wave coherence. We consider an $n$-dimensional Hilbert space of waves and focus on the second-order coherence phenomena~\cite{perina1985,wolf2007}. In this formalism, a wave is represented by a density matrix $\rho \in M_n$~\cite{gabor1961,sudarshan1963,gamo1964,gamo1964a,landau1981,oneill2003,wolf2003,delimabernardo2017,zhang2019m,korotkova2022}, also known as a coherence~\cite{wolf1985} or coherency~\cite{goodman2000,yamazoe2012,okoro2017} matrix in optics. Here, $M_n$ denotes the set of $n\times n$ complex matrices. The density matrix $\rho$ is Hermitian and positive semidefinite. A normalized density matrix satisfies:
\begin{equation}\label{eq:normalization_rho}
\operatorname{tr} \rho = 1.
\end{equation}
The coherence properties of the wave are encoded in the eigenvalues of $\rho$, which we call the coherence spectrum:
\begin{equation}
\bm{\lambda}^{\downarrow}(\rho) = (\lambda^{\downarrow}_{1}(\rho),\dots,\lambda^{\downarrow}_{n}(\rho)),
\end{equation}
where $\downarrow$ denotes reordering the components in non-increasing order, which is in principle directly measurable~\cite{roques-carmes2024}. A perfectly coherent wave has $\bm{\lambda}^\downarrow(\rho) = (1,0,\dots,0)$, while a fully incoherent wave has $\bm{\lambda}^\downarrow(\rho) = (1/n, 1/n, \dots, 1/n )$. For any wave, $\bm{\lambda}^\downarrow(\rho)$ belongs to the set of ordered $n$-dimensional probability vectors:
\begin{equation}
\Delta_{n}^\downarrow = \set{ \bm{x} \in \mathbb{R}^{n} \mid x_{i}\geq 0, x_{i} \geq x_{i+1}, \sum_{i=1}^{n} x_{i} = 1 }.
\end{equation}

To compare the coherence of waves, one must introduce an order on $\Delta_{n}^{\downarrow}$. One approach is to use the entropy:
\begin{equation}
H(\rho)=-\sum_{i=1}^{n} \lambda_{i}^{\downarrow}(\rho) \ln \lambda_{i}^{\downarrow}(\rho), 
\end{equation}
and define $\rho_{1}$ to be less coherent than $\rho_{2}$ in the entropy order if $H\left(\rho_{1}\right)>H\left(\rho_{2}\right)$~\cite{laue1907,gamo1964}. We focus on an alternative order based on majorization. For vectors $\bm{x}$ and $\bm{y}$ in $\mathbb{R}^n$, $\bm{x}$ is majorized by $\bm{y}$, denoted as $\bm{x} \prec \bm{y}$~\cite{marshall2011}, if
\begin{align}
&\sum_{i=1}^k x_i^\downarrow \le \sum_{i=1}^k y_i^\downarrow, \quad k=1,2,\ldots,n-1;  \label{eq:def_majorization_ineq}   \\
&\sum_{i=1}^n x_i = \sum_{i=1}^n y_i.
\end{align}
Intuitively, $\bm{x} \prec \bm{y}$ means that the components of $\bm{x}$ are no more spread out than those of $\bm{y}$. The set $\Delta^\downarrow_n$, together with the majorization relation, denoted as $\langle \Delta^{\downarrow}_{n}, \prec\rangle$, forms a partially ordered set~\cite{marshall2011}. For vectors $\bm{x}$ and $\bm{y}$ in $\Delta^\downarrow_n$, if $\bm{x} \prec \bm{y}$ and $\bm{x} \neq \bm{y}$, then $\bm{x}$ is strictly majorized by $\bm{y}$, denoted as $\bm{x} \precneqq \bm{y}$. If neither $\bm{x} \prec \bm{y}$ nor $\bm{y} \prec \bm{x}$ holds, then $\bm{x}$ and $\bm{y}$ are incomparable, denoted as $\bm{x} \parallel \bm{y}$~\cite{davey2002}. Incomparability can occur when $n\ge 3$. (See Appendix, Sec.~\ref{SI-sec:lattice} and~\ref{SI-sec:majorization_lattice} for more details.) Comparing $\rho_1$ and $\rho_2$ using the majorization order, we obtain four possibilities:
\begin{enumerate}
\item $\bm{\lambda}^\downarrow(\rho_{1}) = \bm{\lambda}^\downarrow(\rho_{2})$: They have the same coherence. 
\item $\bm{\lambda}^\downarrow(\rho_{1}) \precneqq \bm{\lambda}^\downarrow(\rho_{2})$: $\rho_{1}$ is less coherent than $\rho_{2}$. 
\item $\bm{\lambda}^\downarrow(\rho_{2}) \precneqq \bm{\lambda}^\downarrow(\rho_{1})$: $\rho_{1}$ is more coherent than $\rho_{2}$. 
\item $\bm{\lambda}^\downarrow(\rho_{1}) \parallel \bm{\lambda}^\downarrow(\rho_{2})$: Their coherence is incomparable. 
\end{enumerate}
As a sanity check, for any partially coherent wave $\rho$,
\begin{equation}
(\frac{1}{n},\frac{1}{n},\ldots,\frac{1}{n}) \precneqq \bm{\lambda}^\downarrow(\rho) \precneqq (1,0,\ldots,0).
\end{equation}
The majorization order provides a finer comparison than the entropy order. It can be shown that~\cite{marshall2011,cicalese2002}
\begin{equation}
\bm{\lambda}^{\downarrow}(\rho_{1}) \precneqq \bm{\lambda}^{\downarrow}(\rho_{2}) \implies H(\rho_{1}) > H(\rho_{2}),
\end{equation}
However, the converse is not necessarily true. In fact,
\begin{equation}
H(\rho_{1}) > H(\rho_{2}) \implies \bm{\lambda}^{\downarrow}(\rho_{1}) \precneqq \bm{\lambda}^{\downarrow}(\rho_{2}) \; \text{or} \; \bm{\lambda}^{\downarrow}(\rho_{1}) \parallel  \bm{\lambda}^{\downarrow}(\rho_{2}). 
\end{equation}
Consequently, the entropy order fails to capture the incomparable cases.

As examples that will be often used later, consider five $3\times 3$ density matrices $\rho_1$ to $\rho_5$ with
\begin{align}
\bm{\lambda}^\downarrow(\rho_1) &= (0.33,0.33,0.33),  \;
\bm{\lambda}^\downarrow(\rho_2) = (0.60,0.30,0.10), \notag \\
\bm{\lambda}^\downarrow(\rho_3) &= (0.80,0.15,0.05),  \;
\bm{\lambda}^\downarrow(\rho_4) = (1.00,0.00,0.00), \notag \\
\bm{\lambda}^\downarrow(\rho_5) &= (0.55,0.45,0.00). \label{eq:coherence_spectra_five_waves}
\end{align}
The majorization order indicates that
\begin{align}
&\bm{\lambda}^\downarrow(\rho_1) \precneqq \bm{\lambda}^\downarrow(\rho_2) \precneqq \bm{\lambda}^\downarrow(\rho_3) \precneqq \bm{\lambda}^\downarrow(\rho_4), \label{eq:relation_chain_1}\\
&\bm{\lambda}^\downarrow(\rho_1) \precneqq \bm{\lambda}^\downarrow(\rho_5) \precneqq \bm{\lambda}^\downarrow(\rho_4), \label{eq:relation_chain_2}\\
&\bm{\lambda}^\downarrow(\rho_2) \parallel \bm{\lambda}^\downarrow(\rho_5), \quad \bm{\lambda}^\downarrow(\rho_3) \parallel \bm{\lambda}^\downarrow(\rho_5), \label{eq:relation_chain_3}
\end{align}
In contrast, the entropy order indicates that
\begin{equation}
H(\rho_1) > H(\rho_2) > H(\rho_5) > H(\rho_3) > H(\rho_4).
\end{equation}
These relations are summarized in a Hasse diagram~\cite{davey2002} (Fig.~\ref{fig:geometry}a), where the edges indicate the majorization order and the height indicates the entropy order.

\begin{figure}[hbtp]
    \centering
    \includegraphics[width=0.45\textwidth]{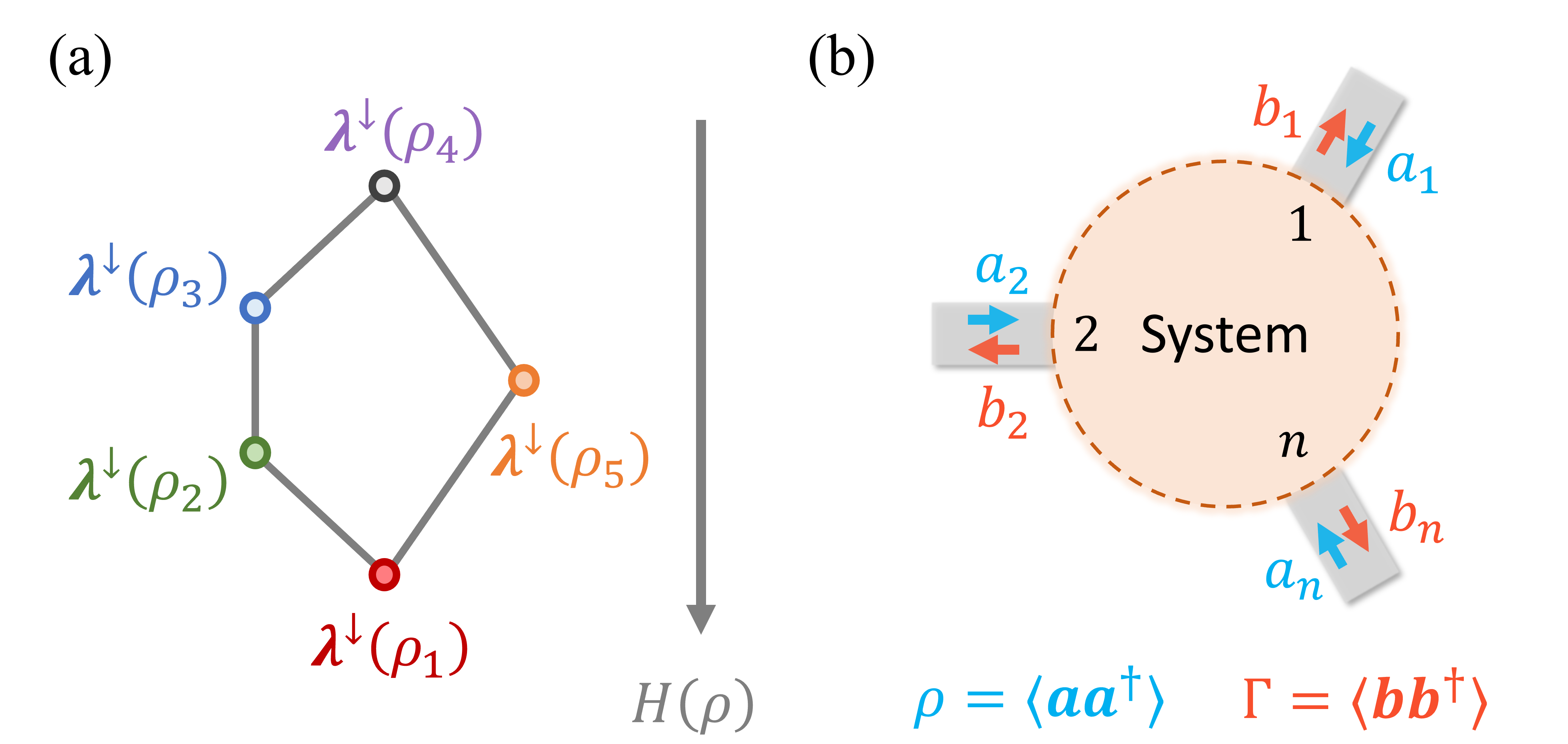}
    \caption{(a) Hasse diagram for $\bm{\lambda}^{\downarrow}\left(\rho_{1}\right)$ to $\bm{\lambda}^{\downarrow}\left(\rho_{5}\right)$. An edge indicates a strict majorization relation between the lower and upper vertices. A higher vertical position indicates a lower entropy. (b) An $n$-port linear time-invariant system.
    }
    \label{fig:geometry}
\end{figure}

This work aims to demonstrate the fundamental role of the majorization order in transport processes. Resource theories treat coherence as a resource that constrains achievable observables~\cite{streltsov2017,chitambar2019}. This perspective motivates us to examine the range of achievable transport responses for input waves with a specific coherence spectrum $\bm{\lambda}^\downarrow(\rho)$. We will show that waves with lower coherence in the majorization order exhibit more constrained ranges of achievable outcomes in transport processees.

Specifically, we consider a linear system where an input wave $\rho$ yields a response $F(\rho)$, with $F$ representing power distribution, absorption, transmission, or reflection. We generate all waves with identical total power and coherence spectrum as $\rho$ via unitary control, which transforms the input wave according to
\begin{equation}\label{eq:unitary_control}
\rho \to \rho[U] = U \rho U^{\dagger}.
\end{equation}
We examine the set of all achievable responses:
\begin{equation}
\{F\} := \{F(\rho[U]) \mid U \in U(n)\}.
\end{equation}
We show that this set preserves the majorization order: For sets $\{F\}_{1}$ and $\{F\}_{2}$ corresponding to waves $\rho_{1}$ and $\rho_{2}$, respectively,
\begin{equation}~\label{eq:general_F_result}
\bm{\lambda}^{\downarrow}(\rho_{1}) \prec \bm{\lambda}^{\downarrow}(\rho_{2}) \implies \{F\}_{1} \subseteq \{F\}_{2}.
\end{equation}
This result reveals the direct physical consequences of the majorization order. Moreover, the converse of Eq.~(\ref{eq:general_F_result}) often holds. Thus, measuring $\{F\}$ enables experimental comparison of coherence in the majorization order.

We begin our detailed analysis by reviewing the scattering matrix and unitary control. Consider an $n$-port linear time-invariant system characterized by a scattering matrix $S \in M_n$~\cite{haus1984} (Fig.~\ref{fig:geometry}b). A coherent input wave, represented by vector $\bm{a}$, scatters into an output wave $\bm{b} = S \bm{a}$. A partially coherent input wave is described by a density matrix:
\begin{equation}
\rho = \langle \bm{a} \bm{a}^{\dagger}\rangle,
\end{equation}
where $\langle \cdot \rangle$ denotes the ensemble average. The diagonal elements of $\rho$, denoted by $\bm{d}(\rho)$, specify the input power in each port, while $\operatorname{tr} \rho$ gives the total input power, assumed to be unity [Eq.~(\ref{eq:normalization_rho})]. The output wave is characterized by an unnormalized density matrix:
\begin{equation}\label{eq:Gamma}
\Gamma = \langle \bm{b} \bm{b}^{\dagger}\rangle = S \rho S^\dagger.
\end{equation}
The diagonal elements of $\Gamma$ represent the output power in each port, and $\operatorname{tr}\Gamma$ is the total output power, which may differ from unity in systems with loss or gain.

\begin{figure}[hbtp]
    \centering
    \includegraphics[width=0.45\textwidth]{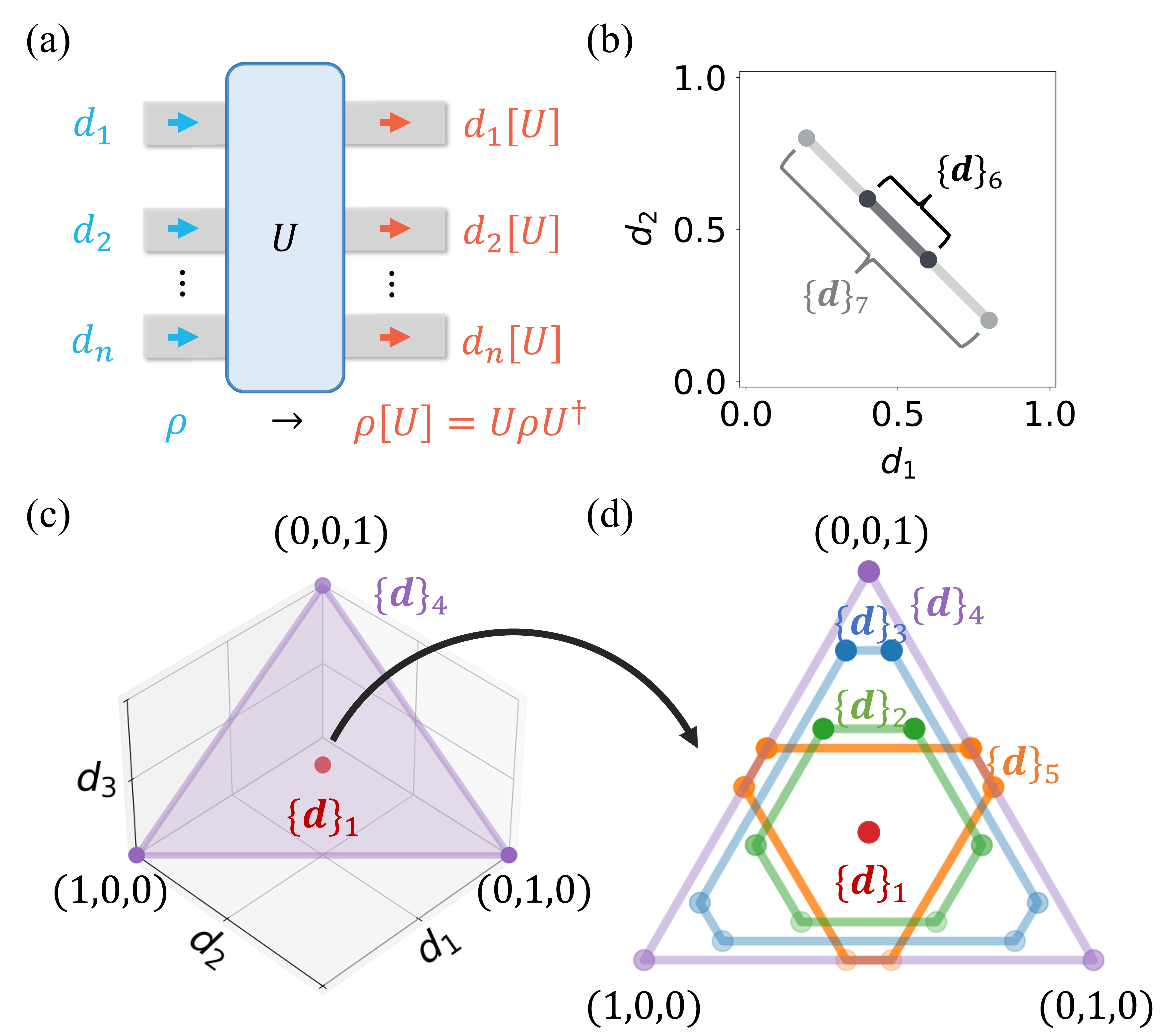}
    \caption{(a) Scheme of unitary control. The power distribution $\bm{d}(\rho) \to \bm{d}(U \rho U^\dagger)$. (b) $\{\bm{d}\}$ for $\rho_{6}$ and $\rho_{7}$. (c,d) $\{\bm{d}\}$ for $\rho_{1}$ to $\rho_{5}$. (c) A 3D plot. All $\{\bm{d}\}$ lie in the plane $d_1 + d_2 + d_3 =1$. (d) Each set $\{\bm{d}\}$ in the plane. Only the boundaries are shown.} 
    \label{fig:unitary_control}
\end{figure}

For a lossless system, the scattering matrix is unitary, denoted as $U$. The output density matrix then becomes $\rho[U]$ as defined in Eq.~(\ref{eq:unitary_control}). This process is called unitary control of the density matrix. Unitary control preserves both the total power and the coherence spectrum:
\begin{equation}
\operatorname{tr}\rho[U] = 1, \qquad \bm{\lambda}^\downarrow(\rho[U])=\bm{\lambda}^\downarrow(\rho).
\end{equation}
Conversely, any pair of waves with identical total power and coherence spectrum can be interconverted through unitary control. Therefore, the set
\begin{equation}
\{\rho[U] \mid U \in U(n)\}
\end{equation}
comprises all waves with the same total power and coherence spectrum as $\rho$.

Unitary control can be implemented using programmable unitary converters such as spatial light modulators~\cite{vellekoop2007,popoff2014,yu2017e}, Mach-Zehnder interferometers~\cite{reck1994,miller2013c,miller2013a,miller2013b,carolan2015,miller2015,clements2016,ribeiro2016,wilkes2016,annoni2017,miller2017d,perez2017,harris2018,pai2019}, and multiplane light conversion devices~\cite{morizur2010,labroille2014,tanomura2022,kupianskyi2023,taguchi2023,zhang2023b}. It has been introduced to manipulate the absorption, transmission, and reflection of both coherent~\cite{guo2023b,guo2024,guo2024c} and partially coherent waves~\cite{guo2024a,guo2024b}. Here, we examine four transport measurements under unitary control: power distribution, absorption, transmission, and reflection.

First, we consider the power distribution measurement (Fig.~\ref{fig:unitary_control}a). We apply unitary control [Eq.~(\ref{eq:unitary_control})] to an input wave $\rho$ and measure the power distribution in each port:
\begin{equation}
\bm{d}(\rho[U]) = \bm{d}(U \rho U^{\dagger}).
\end{equation}
It can be shown that the set of all achievable power distribution vectors under unitary control is:
\begin{align}
\{\bm{d}\} &\coloneqq \{\bm{d}(\rho[U]) \mid U\in U(n)\} \nonumber \\ &= \{\bm{u}\in \mathbb{R}^{n} \mid \bm{u} \prec \bm{\lambda}^{\downarrow}(\rho)\}.     
\label{eq:result_d_set}
\end{align}
Eq.~(\ref{eq:result_d_set}) can be proved using the Schur-Horn theorem~\cite{marshall2011,zhang2019a}. It has a simple geometric interpretation: $\{\bm{d}\}$ is the convex hull spanned by the $n!$ points obtained by permuting the coordinates of $\bm{\lambda}^\downarrow(\rho)$.

Next, consider two input waves $\rho_{1}$ and $\rho_{2}$ with their corresponding sets $\{\bm{d}\}_{1}$ and $\{\bm{d}\}_{2}$. One can prove that:
\begin{equation}\label{eq:result_summary_1}
\bm{\lambda}^{\downarrow}(\rho_{1})  \prec \bm{\lambda}^{\downarrow}(\rho_{2}) \iff \{\bm{d}\}_{1} \subseteq \{\bm{d}\}_{2}.
\end{equation}
More precisely, considering all four possibilities:
\begin{align}
 \bm{\lambda}^{\downarrow}(\rho_{1}) = \bm{\lambda}^{\downarrow}(\rho_{2}) &\iff \{\bm{d}\}_{1} = \{\bm{d}\}_{2} \label{eq:result_d_1}
 \\
 \bm{\lambda}^{\downarrow}(\rho_{1}) \precneqq \bm{\lambda}^{\downarrow}(\rho_{2}) &\iff \{\bm{d}\}_{1} \subsetneqq \{\bm{d}\}_{2} \label{eq:result_d_2}\\
\bm{\lambda}^{\downarrow}(\rho_{2}) \precneqq \bm{\lambda}^{\downarrow}(\rho_{1}) &\iff \{\bm{d}\}_{2} \subsetneqq \{\bm{d}\}_{1} \label{eq:result_d_3}\\
 \bm{\lambda}^{\downarrow}(\rho_{1}) \parallel \bm{\lambda}^{\downarrow}(\rho_{2}) &\iff \{\bm{d}\}_{1} \parallel \{\bm{d}\}_{2} \label{eq:result_d_4}
\end{align}
Here, $A\parallel B$ for two sets $A$ and $B$ means that they intersect, but neither is a subset of the other. See Appendix, Sec.~\ref{SI-sec:detailed_proof_power_distribution} for detailed proofs of Eqs.~(\ref{eq:result_summary_1})-(\ref{eq:result_d_4}). Therefore, the power distribution measurement exactly preserves the majorization order and offers an experimental method to probe the majorization order.

We illustrate these results with two examples.
Firstly, consider two $2\times 2$ density matrices $\rho_{6}$ and $\rho_{7}$ with
\begin{equation}
\bm{\lambda}^{\downarrow}(\rho_{6}) = (0.80,0.20), \quad \bm{\lambda}^{\downarrow}(\rho_{7}) = (0.60,0.40).
\end{equation}
Figure~\ref{fig:unitary_control}b depicts the sets $\{\bm{d}\}_{6}$ and $\{\bm{d}\}_{7}$.
These sets are line segments with endpoints obtained by permuting the coordinates of $\bm{\lambda}^{\downarrow}(\rho_{6})$ and $\bm{\lambda}^{\downarrow}(\rho_{7})$, respectively.
We note that $\{\bm{d}\}_{6} \subsetneqq \{\bm{d}\}_{7}$ because $\bm{\lambda}^{\downarrow}(\rho_{6}) \precneqq \bm{\lambda}^{\downarrow}(\rho_{7})$. 

Secondly, consider the five $3\times 3$ density matrices $\rho_{1}$ to $\rho_{5}$ introduced in Eq.~(\ref{eq:coherence_spectra_five_waves}).
Figure~\ref{fig:unitary_control}(c,d) depicts the sets $\{\bm{d}\}_{1}$ to $\{\bm{d}\}_{5}$.
These sets are convex hexagons with vertices obtained by permuting the coordinates of $\bm{\lambda}^{\downarrow}(\rho_{1})$ to $\bm{\lambda}^{\downarrow}(\rho_{5})$, respectively.
($\{\bm{d}\}_{1}$ and $\{\bm{d}\}_{4}$ are degenerate hexagons with coalescing vertices.)
We note that
\begin{align}
&\{\bm{d}\}_{1} \subsetneqq \{\bm{d}\}_{2} \subsetneqq \{\bm{d}\}_{3} \subsetneqq \{\bm{d}\}_{4}, \\
&\{\bm{d}\}_{1} \subsetneqq \{\bm{d}\}_{5} \subsetneqq \{\bm{d}\}_{4}, \\
&\{\bm{d}\}_{2} \parallel \{\bm{d}\}_{5}, \quad  \{\bm{d}\}_{3} \parallel \{\bm{d}\}_{5},
\end{align}
which confirm Eqs.~(\ref{eq:result_d_1})-(\ref{eq:result_d_4}) applied to Eqs.~(\ref{eq:relation_chain_1})-(\ref{eq:relation_chain_3}).

Second, we consider the absorption measurement (Fig.~\ref{fig:absorption_transmission}a). We input a wave $\rho$ into a system with a scattering matrix $S$ and measure the total absorption:
\begin{equation}
\alpha[S] = \operatorname{tr} (\rho A),
\end{equation}
where $A$ is the absorptivity matrix~\cite{miller2017b,guo2023b}, defined as
\begin{equation}
A \coloneqq I -S^{\dagger} S.
\end{equation}
We apply unitary control [Eq.~(\ref{eq:unitary_control})] to transform the input wave $\rho$. The total absorption then changes to
\begin{equation}
\alpha[S] \to \alpha[S,U] = \operatorname{tr}(U\rho U^{\dagger}A).
\end{equation}
It has been shown that the set of all achievable total absorption values under unitary control is~\cite{guo2024a}
\begin{align}
\{ \alpha \}_S &\coloneqq \{\alpha[S,U] \mid U\in U(n)\}  \nonumber \\ &= \left[\bm{\lambda}^{\downarrow}(\rho)\cdot \bm{\lambda}^{\uparrow}(A),\bm{\lambda}^{\downarrow}(\rho)\cdot \bm{\lambda}^{\downarrow}(A) \right]\label{eq:answer_absorption_interval}
\end{align}
with the absorption eigenvalues given by~\cite{yamilov2016}
\begin{equation}
\bm{\lambda}^{\downarrow}(A) = \bm{1}-\bm{\sigma}^{2 \uparrow}(S) = (1-\sigma^{2\uparrow}_1, \ldots, 1-\sigma^{2 \uparrow}_n),
\end{equation}
where $\bm{\sigma}(S)$ denotes the vector of singular values of $S$, $\uparrow$ indicates reordering the components in nondecreasing order, $[~,~]$ denotes the close real interval, and $\cdot$ represents the usual inner product.

Next, consider two waves $\rho_{1}$ and $\rho_{2}$ with their corresponding sets $\{\alpha\}_{S,1}$ and $\{\alpha\}_{S,2}$. One can prove that~\cite{guo2024a}:
\begin{equation}\label{eq:result_summary_2}
\bm{\lambda}^{\downarrow}(\rho_{1})  \prec \bm{\lambda}^{\downarrow}(\rho_{2}) \iff \forall S \in M_{n}, \{\alpha\}_{S,1} \subseteq \{\alpha\}_{S,2}.
\end{equation}
Thus, the total absorption measurement preserves the majorization order. More precisely,
\begin{align}
\bm{\lambda}^{\downarrow}(\rho_{1}) = \bm{\lambda}^{\downarrow}(\rho_{2}) \iff &\forall S \in M_{n}, \{\alpha\}_{S,1} = \{\alpha\}_{S,2} \label{eq:result_alpha_1}
\\
\bm{\lambda}^{\downarrow}(\rho_{1}) \precneqq \bm{\lambda}^{\downarrow}(\rho_{2}) \iff &\forall S \in M_{n}, \{\alpha\}_{S,1} \subseteq \{\alpha\}_{S,2} \And \notag \\ &\exists \tilde{S} \in M_{n}, \{\alpha\}_{\tilde{S},1} \neq \{\alpha\}_{\tilde{S},2} \label{eq:result_alpha_2}\\
\bm{\lambda}^{\downarrow}(\rho_{2}) \precneqq \bm{\lambda}^{\downarrow}(\rho_{1}) \iff &\forall S \in M_{n}, \{\alpha\}_{S,2} \subseteq \{\alpha\}_{S,1} \And  \notag \\ &\exists \tilde{S} \in M_{n}, \{\alpha\}_{\tilde{S},2} \neq  \{\alpha\}_{\tilde{S},1} \label{eq:result_alpha_3}\\
\bm{\lambda}^{\downarrow}(\rho_{1}) \parallel \bm{\lambda}^{\downarrow}(\rho_{2}) \iff &\exists S \in M_{n}, \{\alpha\}_{S,1} \nsubseteq \{\alpha\}_{S,2} \And \nonumber \\ &\exists \tilde{S} \in M_{n}, \{\alpha\}_{\tilde{S},2} \nsubseteq \{\alpha\}_{\tilde{S},1}\label{eq:result_alpha_4}
\end{align}
Here $\&$ denotes ``and". See Appendix, Sec.~\ref{SI-sec:detailed_proof} for detailed proofs of Eqs.~(\ref{eq:result_alpha_1})-(\ref{eq:result_alpha_4}). Note the differences between Eqs.~(\ref{eq:result_alpha_1})-(\ref{eq:result_alpha_4}) and Eqs.~(\ref{eq:result_d_1})-(\ref{eq:result_d_4}).

As an illustration, Fig.~\ref{fig:absorption_transmission}b depicts $\{\alpha\}_S$ for $\rho_1$ to $\rho_5$ in Eq.~(\ref{eq:coherence_spectra_five_waves}) and a $3\times 3$ $S$-matrix with $\bm{\sigma}^\downarrow(S)=(0.95,0.39,0.32)$. We note that  
\begin{align}
&\{\alpha\}_{S,1} \subsetneqq \{\alpha\}_{S,2} \subsetneqq \{\alpha\}_{S,3} \subsetneqq \{\alpha\}_{S,4}, \\
&\{\alpha\}_{S,1} \subsetneqq \{\alpha\}_{S,5} \subsetneqq \{\alpha\}_{S,4}, \\
&\{\alpha\}_{S,2} \parallel \{\alpha\}_{S,5}, \quad  \{\alpha\}_{S,3} \parallel \{\alpha\}_{S,5},
\end{align}
which confirm Eqs.~(\ref{eq:result_alpha_1})-(\ref{eq:result_alpha_4}) applied to Eqs.~(\ref{eq:relation_chain_1})-(\ref{eq:relation_chain_3}).

The total absorption measurement can also probe the majorization order. We consider two experimental settings. In the first setting, we have a single lossy system with an unknown scattering matrix $S$. We perform total absorption measurements under unitary control and obtain $\{\alpha\}_{S,1}$ and $\{\alpha\}_{S,2}$. By comparing $\{\alpha\}_{S,1}$ and $\{\alpha\}_{S,2}$, we can infer the relation between $\bm{\lambda}^{\downarrow}(\rho_{1})$ and $\bm{\lambda}^{\downarrow}(\rho_{2})$~\footnote{These are the only four possibilities since both $\{\alpha\}_{S,1}$ and $\{\alpha\}_{S,2}$ contain the point $\frac{1}{n}\sum_{i=1}^{n}[1-\sigma^{2}_{i}(S)]$}: \begin{align}
\{\alpha\}_{S,1} = \{\alpha\}_{S,2} \implies& \text{No information;}  \label{eq:single-measurement-1}\\
\{\alpha\}_{S,1} \subsetneqq \{\alpha\}_{S,2} \implies& \bm{\lambda}^{\downarrow}(\rho_{1}) \precneqq \bm{\lambda}^{\downarrow}(\rho_{2}) \;\text{or} \nonumber \\ &\bm{\lambda}^{\downarrow}(\rho_{1}) \parallel \bm{\lambda}^{\downarrow}(\rho_{2});  \label{eq:single-measurement-2}\\
\{\alpha\}_{S,2} \subsetneqq \{\alpha\}_{S,1} \implies& \bm{\lambda}^{\downarrow}(\rho_{2}) \precneqq \bm{\lambda}^{\downarrow}(\rho_{1}) \; \text{or} \nonumber \\  &\bm{\lambda}^{\downarrow}(\rho_{1}) \parallel \bm{\lambda}^{\downarrow}(\rho_{2});  \label{eq:single-measurement-3}\\
\{\alpha\}_{S,1} \parallel \{\alpha\}_{S,2} \implies& \bm{\lambda}^{\downarrow}(\rho_{1}) \parallel \bm{\lambda}^{\downarrow}(\rho_{2}). \label{eq:single-measurement-4}
\end{align} 
See Appendix, Sec.~\ref{SI-sec:measurement_proof} for detailed proofs of Eqs.~(\ref{eq:single-measurement-1})-(\ref{eq:single-measurement-4}). Note that only the last case yields a definite relation [Eq.~(\ref{eq:single-measurement-4})].

Eqs.~(\ref{eq:single-measurement-1})-(\ref{eq:single-measurement-4}) demonstrate that a single lossy system may not provide sufficient information to definitively determine the relation between arbitrary $\bm{\lambda}^\downarrow(\rho_{1})$ and $\bm{\lambda}^\downarrow(\rho_{2})$. To address this limitation, we perform absorption measurements on a set of lossy systems with carefully designed scattering matrices. The minimum number of systems required is $\left\lceil \frac{n-1}{2} \right\rceil$, where $\lceil \cdot \rceil$ represents the ceiling function. This number is necessary because comparing $\{\alpha\}_{S,1}$ and $\{\alpha\}_{S,2}$ in one system yields two inequalities, while verifying the majorization order requires $(n-1)$ inequalities [Eq.~(\ref{eq:def_majorization_ineq})]. To show this number is also sufficient, consider the following set of systems: \begin{equation} S_{m} = \begin{pmatrix} I_{n-m} & O \\ O & O \end{pmatrix}, \quad m=1,2,\dots,\left\lceil \frac{n-1}{2} \right\rceil. \end{equation} Comparing $\{\alpha\}_{S_{m},1}$ and $\{\alpha\}_{S_{m},2}$ enables verification of all $(n-1)$ linear inequalities required for majorization, thus providing sufficient information to determine the definitive relation between any $\bm{\lambda}^\downarrow(\rho_{1})$ and $\bm{\lambda}^\downarrow(\rho_{2})$. Further investigation is needed to develop a general criterion for assessing the sufficiency of an arbitrary set of systems.

\begin{figure}
    \centering
    \includegraphics[width=0.4\textwidth]{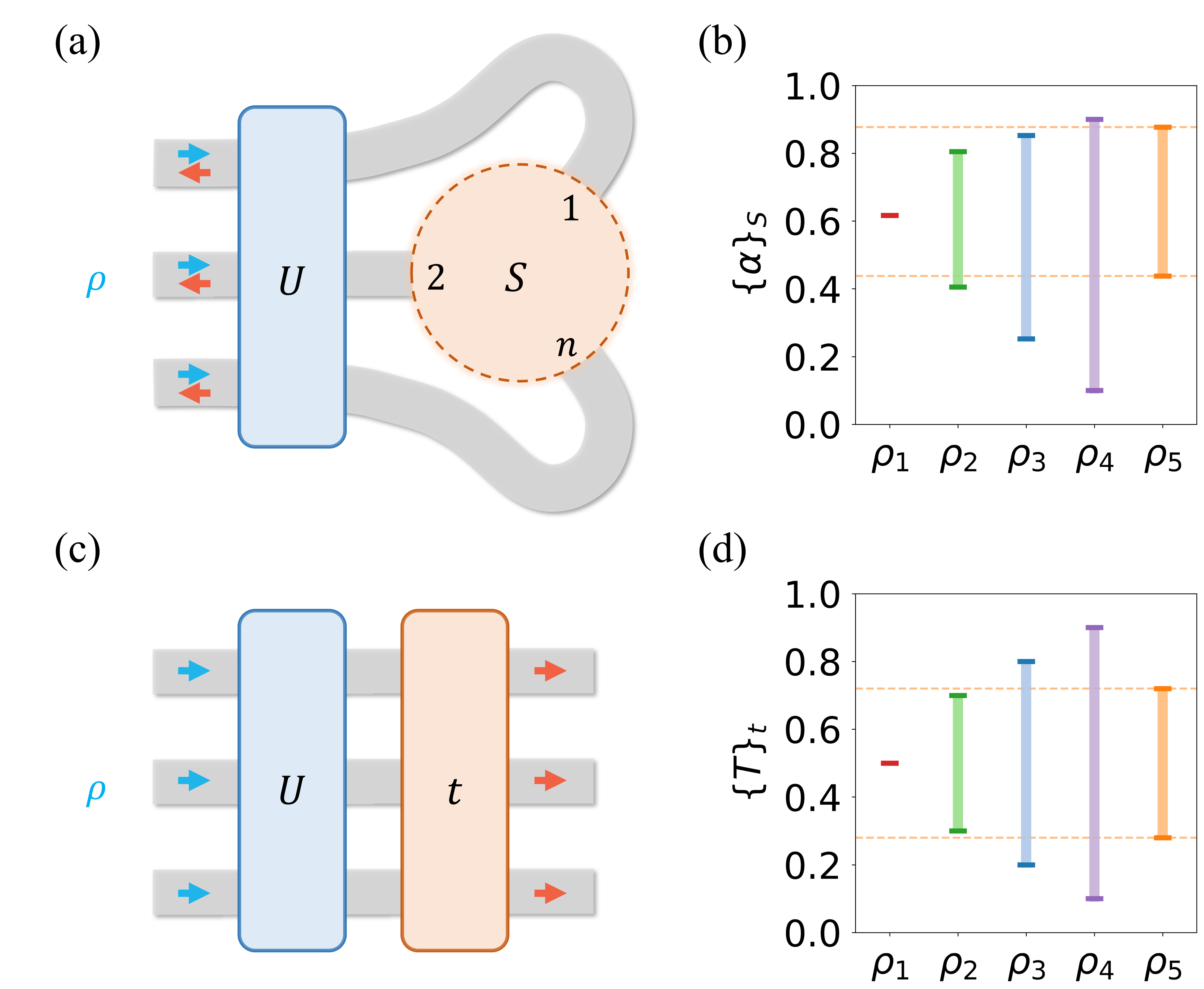}
    \caption{(a) Total absorption measurement with unitary control. (b) $\{\alpha\}_S$ for $\rho_1$ to $\rho_5$ where $\bm{\sigma}^\downarrow(S)=(0.95,0.39,0.32)$. (c) Total transmission measurement with unitary control. (d) $\{T\}_t$ for $\rho_1$ to $\rho_5$ where $\bm{\sigma}^\downarrow(t)=(0.95,0.71,0.32)$. }
    \label{fig:absorption_transmission}
\end{figure}

Third, we consider the total transmission measurement (Fig.~\ref{fig:absorption_transmission}c). We examine a system with a $2n \times 2n$ scattering matrix
\begin{equation}
    \tilde{S} = \begin{pmatrix}
        r & t' \\
        t & r'
    \end{pmatrix}, 
\end{equation}
where $r$ and $t$ are the $n\times n$ reflection and transmission matrices for input from the left, and $r'$ and $t'$ are the corresponding matrices for input from the right. We input a wave $\rho$ from the left and measure the total transmission:
\begin{equation}
T[t] = \operatorname{tr} (\rho t^{\dagger} t).
\end{equation}
We apply unitary control [Eq.~(\ref{eq:unitary_control})] to transform the input wave $\rho$. The total transmission then changes to
\begin{equation}
T[t] \to T[t,U] = \operatorname{tr} (U\rho U^{\dagger}t^{\dagger} t).
\end{equation}
It has been shown that the set of all achievable total transmission values under unitary control is~\cite{guo2024b}
\begin{align}
\{ T \}_t &\coloneqq \{T[t,U] \mid U\in U(n)\} \nonumber \\
&= \left[\bm{\lambda}^{\downarrow}(\rho)\cdot \bm{\sigma}^{2 \uparrow}(t),\bm{\lambda}^{\downarrow}(\rho)\cdot \bm{\sigma}^{2 \downarrow}(t)\right].
\end{align}

Next, consider two waves $\rho_{1}$ and $\rho_{2}$ with their corresponding sets $\{T\}_{t,1}$ and $\{T\}_{t,2}$. One can prove that~\cite{guo2024b}:
\begin{equation}
\bm{\lambda}^{\downarrow}(\rho_{1})  \prec \bm{\lambda}^{\downarrow}(\rho_{2}) \iff \forall t \in M_{n}, \{T\}_{t,1} \subseteq \{T\}_{t,2}.\label{eq:result_t_1}    
\end{equation}
The remaining discussion is analogous to that of absorption and is omitted. The analysis of reflection is similar. As an illustration, Fig.~\ref{fig:absorption_transmission}d depicts $\{T\}_t$ for $\rho_1$ to $\rho_5$ and a $3 \times 3$ $t$-matrix with $\bm{\sigma}^\downarrow(t)=(0.95,0.71,0.32)$, where
\begin{equation}
\{T\}_{t,1} \subsetneqq \{T\}_{t,2} \subsetneqq \{T\}_{t,5}  \subsetneqq \{T\}_{t,3} \subsetneqq \{T\}_{t,4},
\end{equation}
which confirms Eq.~(\ref{eq:result_t_1}) applied to Eqs.~(\ref{eq:relation_chain_1})-(\ref{eq:relation_chain_3}).

We make four final remarks. First, our findings apply to both classical and quantum waves, including optical, acoustic, and electronic varieties. Second, many of our results, especially those concerning incomparable cases, are not captured by other measures such as entropy order.  
Third, while we primarily compared coherence between wave pairs, this can be extended to multiple waves. The mathematical property that $\langle \Delta^{\downarrow}_{n}, \prec\rangle$ forms a complete lattice~\cite{davey2002,alberti1982,bapat1991} ensures that any subset of waves has a well-defined supremum and infimum (see Appendix, Sec.~\ref{SI-sec:majorization_lattice} and \ref{SI-Sec:multiple_waves}). Fourth, all discussed measurements require determining the range
of responses under unitary control. It suffices to find the unitary transformations that achieve the extremal responses, which can be solved using efficient variational algorithms~\cite{guerreschi2017,mitarai2018,schuld2019,mari2021,cerezo2021,bergholm2022},  without running over all unitary transformations $U(n)$. 

In conclusion, our investigation of the majorization order for comparing wave coherence has revealed its fundamental role in transport measurements. We have shown that these measurements preserve the majorization order under unitary control, enabling direct experimental probes of this order for wave coherence. Our work provides a unifying framework for understanding coherence phenomena in wave transport, paving the way for improved coherence characterization and engineering in both classical and quantum technologies.

\begin{acknowledgments}
C.G. thanks Dr.~Zhaoyou Wang for helpful discussions. This work is funded by a Simons Investigator in Physics
grant from the Simons Foundation (Grant No.~827065) and
by a Multidisciplinary University Research Initiative
(MURI) grant from the U.S. Air Force Office of Scientific Research (AFOSR) (Grant No.~FA9550-21-1-0312).
\end{acknowledgments}

\appendix

\section{Detailed proofs of Eqs.~(\ref{eq:result_summary_1})-(\ref{eq:result_d_4})}\label{SI-sec:detailed_proof_power_distribution}

In this section, we provide detailed proofs of Eqs.~(\ref{eq:result_summary_1})-(\ref{eq:result_d_4}) in the main text. From Eq.~(\ref{eq:result_d_set}), we obtain:
\begin{align}
\{\bm{d}\}_{1} &= \{\bm{u}\in \mathbb{R}^{n} \mid \bm{u} \prec \bm{\lambda}^{\downarrow}(\rho_{1})\}, \\
\{\bm{d}\}_{2} &= \{\bm{v}\in \mathbb{R}^{n} \mid \bm{v} \prec \bm{\lambda}^{\downarrow}(\rho_{2})\}.
\end{align}

\subsection{Proof of Eq.~(\ref{eq:result_summary_1})}
\begin{proof}
First, we prove the forward implication ``$\implies$". From $\bm{\lambda}^{\downarrow}(\rho_{1}) \prec \bm{\lambda}^{\downarrow}(\rho_{2})$ and the transitivity of majorization, we deduce that
\begin{align}
\bm{u} \in \{\bm{d}\}_{1} &\iff \bm{u} \prec \bm{\lambda}^{\downarrow}(\rho_{1}) \implies \nonumber \\
\bm{u} \prec \bm{\lambda}^{\downarrow}(\rho_{2}) &\iff  \bm{u} \in \{\bm{d}\}_{2}.    
\end{align}
Thus, we obtain
\begin{equation}
\bm{\lambda}^{\downarrow}(\rho_{1}) \prec \bm{\lambda}^{\downarrow}(\rho_{2}) \implies \{\bm{d}\}_{1} \subseteq \{\bm{d}\}_{2}.     
\end{equation}
Second, we prove the reverse implication ``$\impliedby$". From $\{\bm{d}\}_{1} \subseteq \{\bm{d}\}_{2}$ and the reflexivity of majorization, we deduce that
\begin{align}
\bm{\lambda}^{\downarrow}(\rho_{1}) \prec \bm{\lambda}^{\downarrow}(\rho_{1}) &\implies \bm{\lambda}^{\downarrow}(\rho_{1})  \in \{\bm{d}\}_{1} \implies \nonumber \\
\bm{\lambda}^{\downarrow}(\rho_{1})  \in \{\bm{d}\}_{2} &\implies \bm{\lambda}^{\downarrow}(\rho_{1})  \prec \bm{\lambda}^{\downarrow}(\rho_{2}).    
\end{align}
Thus, we obtain 
\begin{equation}
\bm{\lambda}^{\downarrow}(\rho_{1}) \prec \bm{\lambda}^{\downarrow}(\rho_{2}) \impliedby \{\bm{d}\}_{1} \subseteq \{\bm{d}\}_{2}.     
\end{equation}
This completes the proof of Eq.~(\ref{eq:result_summary_1}). 
\end{proof}

\subsection{Proof of Eq.~(\ref{eq:result_d_1})}
\begin{proof}
Note that
\begin{align}
\bm{\lambda}^{\downarrow}(\rho_{1}) = \bm{\lambda}^{\downarrow}(\rho_{2}) &\iff \bm{\lambda}^{\downarrow}(\rho_{1}) \prec \bm{\lambda}^{\downarrow}(\rho_{2}) \And \bm{\lambda}^{\downarrow}(\rho_{2}) \prec \bm{\lambda}^{\downarrow}(\rho_{1}) \nonumber \\
&\iff \{\bm{d}\}_{1} \subseteq \{\bm{d}\}_{2} \And \{\bm{d}\}_{2} \subseteq \{\bm{d}\}_{1}  \nonumber \\
&\iff \{\bm{d}\}_{1} = \{\bm{d}\}_{2}.
\end{align}
The first equivalence follows from the antisymmetry of $\prec$ as a partial order on $\Delta_{n}^{\downarrow}$. The second equivalence is due to Eq.~(\ref{eq:result_summary_1}). This completes the proof of Eq.~(\ref{eq:result_d_1}).
\end{proof}

\subsection{Proof of Eq.~(\ref{eq:result_d_2})}
\begin{proof}
Note that
\begin{align}
\bm{\lambda}^{\downarrow}(\rho_{1}) \precneqq \bm{\lambda}^{\downarrow}(\rho_{2}) &\iff \bm{\lambda}^{\downarrow}(\rho_{1}) \prec \bm{\lambda}^{\downarrow}(\rho_{2}) \And  \bm{\lambda}^{\downarrow}(\rho_{1}) \neq \bm{\lambda}^{\downarrow}(\rho_{2}) \nonumber \\
&\iff \{\bm{d}\}_{1} \subseteq \{\bm{d}\}_{2} \And \{\bm{d}\}_{1} \neq \{\bm{d}\}_{2}  \nonumber \\
&\iff \{\bm{d}\}_{1} \subsetneqq \{\bm{d}\}_{2}.
\end{align}
The first equivalence holds by definition. The second equivalence follows from Eq.~(\ref{eq:result_summary_1}) and the contrapositive of Eq.~(\ref{eq:result_d_1}). This completes the proof of Eq.~(\ref{eq:result_d_2}).
\end{proof}

\subsection{Proof of Eq.~(\ref{eq:result_d_3})}
\begin{proof}
The proof is similar to that of Eq.~(\ref{eq:result_d_2}), with the roles of $\rho_1$ and $\rho_2$ interchanged.
\end{proof}

\subsection{Proof of Eq.~(\ref{eq:result_d_4})}
\begin{proof}
First, we note that 
\begin{equation}
(\frac{1}{n},\frac{1}{n},\ldots,\frac{1}{n}) \prec \bm{\lambda}^{\downarrow}(\rho_{1}), \; (\frac{1}{n},\frac{1}{n},\ldots,\frac{1}{n}) \prec \bm{\lambda}^{\downarrow}(\rho_{2}).   
\end{equation}
It follows that
\begin{equation}\label{eq:d1_d2_intersect}
(\frac{1}{n},\frac{1}{n},\ldots,\frac{1}{n}) \in \{\bm{d}\}_{1}, \; (\frac{1}{n},\frac{1}{n},\ldots,\frac{1}{n}) \in \{\bm{d}\}_{2},    
\end{equation}
hence $\{\bm{d}\}_{1}$ and $\{\bm{d}\}_{2}$ intersects. Now we note that
\begin{align}
\bm{\lambda}^{\downarrow}(\rho_{1}) \parallel \bm{\lambda}^{\downarrow}(\rho_{2}) \iff &\neg [\bm{\lambda}^{\downarrow}(\rho_{1}) \prec \bm{\lambda}^{\downarrow}(\rho_{2})] \And  \nonumber \\ 
&\neg [\bm{\lambda}^{\downarrow}(\rho_{2}) \prec \bm{\lambda}^{\downarrow}(\rho_{1})]  \nonumber \\
\iff &\{\bm{d}\}_{1} \nsubseteq \{\bm{d}\}_{2} \And \{\bm{d}\}_{2} \nsubseteq \{\bm{d}\}_{1}  \nonumber\\
\iff &\{\bm{d}\}_{1} \parallel \{\bm{d}\}_{2},
\end{align} 
where $\neg$ denotes ``not". The first equivalence is by definition. The second equivalence follows from the contrapositive of Eq.~(\ref{eq:result_summary_1}). The third equivalence is due to Eq.~(\ref{eq:d1_d2_intersect}) and the definition of $\{\bm{d}\}_{1} \parallel \{\bm{d}\}_{2}$. This completes the proof of Eq.~(\ref{eq:result_d_4}).
\end{proof}

\section{Detailed proofs of Eqs.~(\ref{eq:result_summary_2})-(\ref{eq:result_alpha_4})}\label{SI-sec:detailed_proof}

In this section, we provide detailed proofs of Eqs.~(\ref{eq:result_summary_2})-(\ref{eq:result_alpha_4}) in the main text. From Eq.~(\ref{eq:answer_absorption_interval}), we obtain:
\begin{align}
\{\alpha\}_{S,1} &= [\bm{\lambda}^{\downarrow}(\rho_{1})\cdot \bm{\lambda}^{\uparrow}(A), \bm{\lambda}^{\downarrow}(\rho_{1})\cdot \bm{\lambda}^{\downarrow}(A)], \\
\{\alpha\}_{S,2} &= [\bm{\lambda}^{\downarrow}(\rho_{2})\cdot \bm{\lambda}^{\uparrow}(A), \bm{\lambda}^{\downarrow}(\rho_{2})\cdot \bm{\lambda}^{\downarrow}(A)].
\end{align}

\subsection{Proof of Eq.~(\ref{eq:result_summary_2})}
\begin{proof} 

The proof of the forward implication ``$\implies$" can be found in Ref.~\cite{guo2024a}, Eq.~(78). To prove the reverse implication ``$\impliedby$", consider the set of systems 
\begin{equation}
\tilde{S}_{k} = \begin{pmatrix}
I_{k} & O  \\
O & O
\end{pmatrix}, \qquad k=1,2,\dots n-1.    
\end{equation}
By examining the right endpoints of the absorption intervals, we obtain:
\begin{equation}~\label{eq:SI-proof_majorization_ineq}
\{\alpha\}_{\tilde{S}_{k},1} \subseteq \{\alpha\}_{\tilde{S}_{k},2} \implies \sum_{i=1}^{k} \lambda_{i}^{\downarrow}(\rho_{1}) \leq \sum_{i=1}^{k} \lambda_{i}^{\downarrow}(\rho_{2}).    
\end{equation}
Also, by the normalization condition~(\ref{eq:normalization_rho}),
\begin{equation}~\label{eq:SI-proof_majorization_eq}
\sum_{i=1}^{n} \lambda_{i}^{\downarrow}(\rho_{1}) = \sum_{i=1}^{n} \lambda_{i}^{\downarrow}(\rho_{2}) = 1.    
\end{equation}
Combining Eqs.~(\ref{eq:SI-proof_majorization_ineq}) and (\ref{eq:SI-proof_majorization_eq}), we obtain:
\begin{equation}
\bm{\lambda}^{\downarrow}(\rho_{1}) \prec \bm{\lambda}^{\downarrow}(\rho_{2}).    
\end{equation}
This completes the proof of Eq.~(\ref{eq:result_summary_2}).   
\end{proof}

\subsection{Proof of Eq.~(\ref{eq:result_alpha_1})}
\begin{proof}
Note that
\begin{align}
&\bm{\lambda}^{\downarrow}(\rho_{1}) = \bm{\lambda}^{\downarrow}(\rho_{2}) \iff \bm{\lambda}^{\downarrow}(\rho_{1}) \prec \bm{\lambda}^{\downarrow}(\rho_{2}) \And \bm{\lambda}^{\downarrow}(\rho_{2}) \prec \bm{\lambda}^{\downarrow}(\rho_{1})  \nonumber \\
&\iff \forall S \in M_{n}, \{\alpha\}_{S,1} \subseteq \{\alpha\}_{S,2} \And \{\alpha\}_{S,2} \subseteq \{\alpha\}_{S,1}  \nonumber\\
&\iff \forall S\in M_{n}, \{\alpha\}_{S,1} = \{\alpha\}_{S,2}.
\end{align}
The first equivalence follows from the antisymmetry of $\prec$ as a partial order on $\Delta_{n}^{\downarrow}$. The second equivalence is due to Eq.~(\ref{eq:result_summary_2}). This completes the proof of Eq.~(\ref{eq:result_alpha_1}). 
\end{proof}

\subsection{Proof of Eq.~(\ref{eq:result_alpha_2})}
\begin{proof}
Note that
\begin{align}
\bm{\lambda}^{\downarrow}(\rho_{1}) \precneqq \bm{\lambda}^{\downarrow}(\rho_{2}) &\iff \bm{\lambda}^{\downarrow}(\rho_{1}) \prec \bm{\lambda}^{\downarrow}(\rho_{2}) \And  \bm{\lambda}^{\downarrow}(\rho_{1}) \neq \bm{\lambda}^{\downarrow}(\rho_{2})  \nonumber \\
&\iff \forall S \in M_{n}, \{\alpha\}_{S,1} \subseteq \{\alpha\}_{S,2} \And \nonumber \\
&\qquad \quad \exists \tilde{S} \in M_{n}, \{\alpha\}_{\tilde{S},1} \neq \{\alpha\}_{\tilde{S},2} 
\end{align}
The first equivalence holds by definition. The second equivalence follows from Eq.~(\ref{eq:result_summary_2}) and the contrapositive of Eq.~(\ref{eq:result_alpha_1}). This completes the proof of Eq.~(\ref{eq:result_alpha_2}). 
\end{proof}

\subsection{Proof of Eq.~(\ref{eq:result_alpha_3})}
\begin{proof}
The proof is similar to that of Eq.~(\ref{eq:result_alpha_2}), with the roles of $\rho_1$ and $\rho_2$ interchanged.    
\end{proof}

\subsection{Proof of Eq.~(\ref{eq:result_alpha_4})}
\begin{proof}
Note that
\begin{align}
\bm{\lambda}^{\downarrow}(\rho_{1}) \parallel \bm{\lambda}^{\downarrow}(\rho_{2}) \iff &\neg [\bm{\lambda}^{\downarrow}(\rho_{1}) \prec \bm{\lambda}^{\downarrow}(\rho_{2})] \And  \nonumber \\ 
&\neg [\bm{\lambda}^{\downarrow}(\rho_{2}) \prec \bm{\lambda}^{\downarrow}(\rho_{1})]  \nonumber \\
\iff &\exists S \in M_{n}, \{\alpha\}_{S,1} \nsubseteq \{\alpha\}_{S,2} \And \nonumber \\ &\exists \tilde{S} \in M_{n}, \{\alpha\}_{\tilde{S},2} \nsubseteq \{\alpha\}_{\tilde{S},1}
\end{align} 
The first equivalence holds by definition. The second equivalence follows from the contrapositive of Eq.~(\ref{eq:result_summary_2}).
This completes the proof of Eq.~(\ref{eq:result_alpha_4}).
\end{proof}

\section{Detailed proofs of Eqs.~(\ref{eq:single-measurement-1})-(\ref{eq:single-measurement-4})}\label{SI-sec:measurement_proof}

In this section, we provide detailed proofs of Eqs.~(\ref{eq:single-measurement-1})-(\ref{eq:single-measurement-4}) in the main text. We prove them by considering all four possible relations between $\bm{\lambda}^{\downarrow}(\rho_{1})$ and $\bm{\lambda}^{\downarrow}(\rho_{2})$.

\subsection{Proof of Eq.~(\ref{eq:single-measurement-1})}
\begin{proof}
For example, consider the case when $S= \kappa I$ with $0 \leq \kappa \leq 1$. Then, for any $\rho_{1}$ and $\rho_{2}$,
\begin{equation}
\{\alpha\}_{S,1} = \{\alpha\}_{S,2} = \{1-\kappa^{2}\}.    
\end{equation}
Thus, in this case, we cannot infer any information about the relation between $\rho_{1}$ and $\rho_{2}$.    
\end{proof}

\subsection{Proof of Eq.~(\ref{eq:single-measurement-2})}
\begin{proof}
Since $\{\alpha\}_{S,1} \subsetneqq \{\alpha\}_{S,2}$, by Eqs.~(\ref{eq:result_alpha_1}) and (\ref{eq:result_alpha_3}), it is impossible that 
\begin{equation}
\bm{\lambda}^{\downarrow}(\rho_{2}) \precneqq \bm{\lambda}^{\downarrow}(\rho_{1}) \text{ or } \bm{\lambda}^{\downarrow}(\rho_{1}) = \bm{\lambda}^{\downarrow}(\rho_{2}).     
\end{equation}
The other two cases
\begin{equation}
\bm{\lambda}^{\downarrow}(\rho_{1}) \precneqq \bm{\lambda}^{\downarrow}(\rho_{2}) \text{ or } \bm{\lambda}^{\downarrow}(\rho_{1}) \parallel \bm{\lambda}^{\downarrow}(\rho_{2}),    
\end{equation}
are both possible. For example, consider $\bm{\lambda}(\rho_{3})$, $\bm{\lambda}(\rho_{4})$, and $\bm{\lambda}(\rho_{5})$ as defined in Eq.~(\ref{eq:coherence_spectra_five_waves}), which satisfy:
\begin{equation}\label{eq:relation_rho534}
\bm{\lambda}^{\downarrow}(\rho_{5}) \parallel \bm{\lambda}^{\downarrow}(\rho_{3}), \quad  \bm{\lambda}^{\downarrow}(\rho_{3}) \precneqq \bm{\lambda}^{\downarrow}(\rho_{4}).    
\end{equation}
Now, consider $S = \operatorname{diag}(0.95,0.39,0.32)$; thus, $\bm{\lambda}^{\downarrow}(A) = (0.95,0.71,0.31)$. From Eq.~(\ref{eq:answer_absorption_interval}), we obtain:
\begin{align}
\{\alpha\}_{S,3} &= [0.40,0.88], \\
 \{\alpha\}_{S,4} &= [0.31,0.95],  \\
\{\alpha\}_{S,5} &= [0.49,0.84].
\end{align}
Thus, 
\begin{equation}\label{eq:alpha_rho534}
\{\alpha\}_{S,5} \subsetneqq \{\alpha\}_{S,3} \subsetneqq \{\alpha\}_{S,4}.     
\end{equation}  
Eqs.~(\ref{eq:relation_rho534}) and (\ref{eq:alpha_rho534}) show that both cases are possible.
\end{proof}

\subsection{Proof of Eq.~(\ref{eq:single-measurement-3})}
\begin{proof}
The proof is similar to that of Eq.~(\ref{eq:single-measurement-2}), with the roles of $\rho_1$ and $\rho_2$ interchanged.    
\end{proof}
 
\subsection{Proof of Eq.~(\ref{eq:single-measurement-4})}
\begin{proof}
Note that
\begin{align}
\{\alpha\}_{S,1} \parallel \{\alpha\}_{S,2}  &\implies \{\alpha\}_{S,1} \nsubseteq \{\alpha\}_{S,2} \And \{\alpha\}_{S,2}\nsubseteq \{\alpha\}_{S,1}  \nonumber \\
&\implies \bm{\lambda}^{\downarrow}(\rho_{1}) \parallel \bm{\lambda}^{\downarrow}(\rho_{2}).
\end{align}
Here the first implication follows from the definition of $\parallel$, and the second implication follows from Eq.~(\ref{eq:result_alpha_4}).   
\end{proof}

\begin{remark}
We have proved that  
\begin{equation}
\exists S\in M_{n}, \{\alpha\}_{S,1} \parallel \{\alpha\}_{S,2}  \implies \bm{\lambda}^{\downarrow}(\rho_{1}) \parallel \bm{\lambda}^{\downarrow}(\rho_{2}).    
\end{equation}
However, the converse is not true, i.e., 
\begin{equation}
\bm{\lambda}^{\downarrow}(\rho_{1}) \parallel \bm{\lambda}^{\downarrow}(\rho_{2}) \centernot\implies \exists S\in M_{n}, \{\alpha\}_{S,1} \parallel \{\alpha\}_{S,2}.    
\end{equation}
For example, consider 
\begin{equation}
\bm{\lambda}^{\downarrow}(\rho_{1}) = \left( \frac{1}{2} , \frac{1}{4}, \frac{1}{4} , 0 \right), \;  \bm{\lambda}^{\downarrow}(\rho_{2}) = \left( \frac{3}{8} , \frac{1}{3} , \frac{1}{6} , \frac{1}{8} \right).     
\end{equation}
We denote 
\begin{align}
\bm{\mu} &= \bm{\lambda}^{\downarrow}(A) = \bm{1}- \bm{\sigma}^{2\uparrow}(S), \\
\bar{\mu} &= \frac{1}{4}(\mu_{1}+\mu_{2}+\mu_{3}+\mu_{4}), \\
r_1 &= \frac{1}{4}(\mu_{1}-\mu_{4}), \\
r_2 &= \frac{1}{8}(\mu_{1}-\mu_{4}) + \frac{1}{12}(\mu_{2}-\mu_{3}).
\end{align}
We calculate from Eq.~(\ref{eq:answer_absorption_interval}) that
\begin{align}
\{\alpha\}_{S,1} &= \left[ \bar{\mu} - r_1 , \bar{\mu} + r_1 \right],  \\ 
\{\alpha\}_{S,2} &= \left[ \bar{\mu} - r_2, \bar{\mu} + r_2 \right].
\end{align}
These two sets have three possible relationships:
\begin{equation}
\begin{cases}
\{\alpha\}_{S,1} \subsetneqq \{\alpha\}_{S,2},  & \text{when } r_1 < r_2; \\
\{\alpha\}_{S,1} = \{\alpha\}_{S,2}, & \text{when } r_1 = r_2; \\
\{\alpha\}_{S,2} \subsetneqq \{\alpha\}_{S,1}, & \text{when } r_1 > r_2.
\end{cases}   
\end{equation}
But it is impossible that
\begin{equation}
\{\alpha\}_{S,1} \parallel \{\alpha\}_{S,2}.    
\end{equation}
\end{remark}

\section{Ordered sets and lattices} \label{SI-sec:lattice}
In this section, we briefly review ordered sets and lattices. See Refs.~\cite{davey2002,roman2008} for a more detailed introduction.
\begin{definition}[partial order]
A \emph{partial order} on a nonempty set $P$ is a binary relation $\leqslant$ that is reflexive, antisymmetric, and transitive. Specifically, for all $x,y,z \in P$, the following properties hold:
\begin{enumerate}
\item Reflexivity: $x \leqslant x$.
\item Antisymmetry: $x \leqslant y$ and $y \leqslant x$ imply $x=y$.
\item Transitivity: $x \leqslant y$ and $y \leqslant z$ imply $x \leqslant z$.
\end{enumerate}
The pair $\langle P, \leqslant \rangle$ is called a \emph{partially ordered set} or \emph{poset}. If $x \leqslant y$ and $x \neq y$, we denote it as $x < y$. If $x \leqslant y$ or $y \leqslant x$, then $x$ and $y$ are said to be \emph{comparable}. Otherwise, they are \emph{incomparable}, denoted by $x \parallel y$.
\end{definition}
For subsets $S, T \subseteq P$, $S \leqslant T$ means $s \leqslant t$ for all $s \in S$ and $t \in T$. If $T = \{t\}$, we write $S \leqslant t$. Similarly, if $S = \{s\}$, we write $s \leqslant T$.
\begin{definition}[top and bottom]
A poset $\langle P, \leqslant \rangle$ is said to have
\begin{enumerate}
\item a top element $\top \in P$ if $x \leqslant \top$ for all $x \in P$;
\item a bottom element $\bot \in P$ if $\bot \leqslant x$ for all $x \in P$.
\end{enumerate}
A poset is called \emph{bounded} if it has both top and bottom elements.
\end{definition}
\begin{definition}[supremum and infimum]
Let $\langle P, \leqslant \rangle$ be a poset and let $S$ be a subset of $P$.
\begin{enumerate}
\item An \emph{upper bound} for $S$ is an element $x \in P$ such that $s \leqslant x$ for all $s \in S$. The set of all upper bounds for $S$ is denoted by $S^u$. If $S^u$ has a least element, it is called the \emph{supremum} or \emph{join} of $S$ and is denoted by $\bigvee S$. For a finite set $S = \{ a_1, \ldots, a_n \}$, the join is denoted by $a_1 \vee \ldots \vee a_n$. If $\bigvee S \in S$, it is called the maximum of $S$ and is denoted by $\max S$.
\item A \emph{lower bound} for $S$ is an element $x \in P$ such that $x \leqslant s$ for all $s \in S$. The set of all lower bounds for $S$ is denoted by $S^l$. If $S^l$ has a greatest element, it is called the \emph{infimum} or \emph{meet} of $S$ and is denoted by $\bigwedge S$. For a finite set $S = \{ a_1, \ldots, a_n \}$, the meet is denoted by $a_1 \wedge \ldots \wedge a_n$. If $\bigwedge S \in S$, it is called the minimum of $S$ and is denoted by $\min S$.
\end{enumerate}
\end{definition}
\begin{definition} [lattice and complete lattice]
Let $\langle P, \leqslant \rangle$ be a poset. Then:
\begin{enumerate}
\item $P$ is called a \emph{lattice} if every pair of elements in $P$ has a meet and a join.
\item $P$ is called a \emph{complete lattice} if every subset of $P$ has a meet and a join.
\end{enumerate}
\end{definition}
\begin{proposition}
Every complete lattice is bounded.
\end{proposition}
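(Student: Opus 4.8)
The plan is to exhibit the top and bottom elements explicitly as the join and meet of the entire underlying set $P$. Since $\langle P, \leqslant \rangle$ is a complete lattice, every subset of $P$ has a join and a meet; in particular this applies to the improper subset $P$ itself. So I would set $\top := \bigvee P$ and $\bot := \bigwedge P$ and then verify that these elements satisfy the defining conditions of a top and a bottom element.

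For the top element: by definition $\bigvee P$ is the least element of $P^{u}$, the set of upper bounds of $P$. In particular $\bigvee P$ is itself an upper bound of $P$, so $x \leqslant \bigvee P$ for every $x \in P$. Moreover $\bigvee P \in P$, since the join of a subset of $P$ is by construction an element of $P$. Hence $\bigvee P$ is a top element in the sense of the definition. The argument for $\bot := \bigwedge P$ is dual: $\bigwedge P$ is a lower bound of $P$, so $\bigwedge P \leqslant x$ for all $x \in P$, and $\bigwedge P \in P$, so $\bigwedge P$ is a bottom element. Having produced both a $\top$ and a $\bot$, the poset $P$ is \emph{bounded} by definition, which completes the argument.

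I do not expect any real obstacle here; the proof is essentially a direct unpacking of the definitions. The only point that deserves a moment's care is recognizing that the completeness hypothesis is being invoked for the subset $P \subseteq P$ (and noting that joins and meets in a lattice always land back in $P$). An alternative route would be to take $\top = \bigwedge \varnothing$ and $\bot = \bigvee \varnothing$ via the vacuous-bound convention, but using $\bigvee P$ and $\bigwedge P$ sidesteps any discussion of empty-set conventions and keeps the proof self-contained relative to the definitions given above.
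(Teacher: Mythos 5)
Your proof is correct and is the standard argument: the paper states this proposition without proof, and taking $\top = \bigvee P$ and $\bot = \bigwedge P$ (each of which exists by completeness applied to the subset $P \subseteq P$ and lies in $P$ by the definition of join and meet) is exactly the canonical way to establish it. Your closing remark about the empty-set alternative is also accurate, and avoiding it is a reasonable choice given the definitions as stated in the paper.
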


\section{Majorization lattice}\label{SI-sec:majorization_lattice}
In this section, we briefly review the mathematical fact that $\langle \Delta^{\downarrow}_{n}, \prec\rangle$ forms a complete lattice.
\begin{proposition}\label{prop:delta_n_poset}
The pair $\langle \Delta_n^\downarrow, \prec \rangle$ is a poset.
\end{proposition}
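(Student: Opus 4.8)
The plan is to verify directly the three defining properties of a partial order for the relation $\prec$ restricted to $\Delta_n^\downarrow$, working straight from the definition of majorization in Eq.~(\ref{eq:def_majorization_ineq}). The one structural fact that does all the work is that every $\bm{x}\in\Delta_n^\downarrow$ already has its components in non-increasing order, so that $x_i^\downarrow = x_i$ for all $i$; I would state this at the outset.

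\emph{Reflexivity} and \emph{transitivity} are then immediate. For reflexivity, given $\bm{x}\in\Delta_n^\downarrow$ the partial-sum inequalities $\sum_{i=1}^k x_i^\downarrow \le \sum_{i=1}^k x_i^\downarrow$ hold trivially for $k=1,\dots,n-1$ and $\sum_{i=1}^n x_i=\sum_{i=1}^n x_i$, so $\bm{x}\prec\bm{x}$. For transitivity, if $\bm{x}\prec\bm{y}$ and $\bm{y}\prec\bm{z}$ with all three in $\Delta_n^\downarrow$, I would simply chain the inequalities, $\sum_{i=1}^k x_i^\downarrow \le \sum_{i=1}^k y_i^\downarrow \le \sum_{i=1}^k z_i^\downarrow$ for each $k\le n-1$, and note that all three vectors sum to $1$, giving $\bm{x}\prec\bm{z}$.

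\emph{Antisymmetry} is the only step with real content, and it is precisely where the restriction to $\Delta_n^\downarrow$ is used. I would argue: if $\bm{x}\prec\bm{y}$ and $\bm{y}\prec\bm{x}$, then since both vectors are already sorted the two families of majorization inequalities become $\sum_{i=1}^k x_i \le \sum_{i=1}^k y_i$ and $\sum_{i=1}^k y_i \le \sum_{i=1}^k x_i$ for each $k$, forcing $\sum_{i=1}^k x_i = \sum_{i=1}^k y_i$ for all $k=1,\dots,n$ (the case $k=n$ being the shared normalization). Subtracting the identity for $k-1$ from that for $k$ then yields $x_k=y_k$ for every $k$, i.e.\ $\bm{x}=\bm{y}$.

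I do not anticipate a genuine obstacle here, but the point worth stressing in the writeup is the subtlety behind antisymmetry: on all of $\mathbb{R}^n$ majorization is merely a preorder — any permutation of a vector majorizes it and vice versa without the two being equal — so antisymmetry is recovered only after one passes to the sorted representatives that constitute $\Delta_n^\downarrow$. The proof should therefore make the substitution $x_i^\downarrow = x_i$ explicit at the moment it is invoked. No heavier machinery (doubly stochastic matrices, the Birkhoff or Schur--Horn theorems) is needed for this proposition.
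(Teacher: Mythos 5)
Your proposal is correct and follows exactly the route the paper intends: the paper's proof simply asserts that reflexivity, antisymmetry, and transitivity can be verified directly on $\Delta_n^\downarrow$, and you carry out that verification, correctly isolating antisymmetry as the step that requires the restriction to sorted vectors. No discrepancy with the paper's approach.
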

\begin{proof}
One can directly verify that the majorization relation is reflexive, antisymmetric, and transitive on $\Delta_n^\downarrow$.
\end{proof}
The poset $\langle \Delta_n^\downarrow, \prec \rangle$ possesses additional properties. For any $\bm{x}, \bm{y} \in \Delta_n^\downarrow$, their infimum $\bm{x} \wedge \bm{y}$ and supremum $\bm{x} \vee \bm{y}$ exist and belong to $\Delta_n^\downarrow$. Thus, the algebraic structure $\langle \Delta_n^\downarrow, \prec, \wedge, \vee \rangle$ forms a lattice, known as the \emph{majorization lattice}. Furthermore, this lattice is complete.
\begin{theorem}[Completeness of Majorization Lattice]\label{thm:majorization_lattice}
The algebraic structure $\langle \Delta_n^\downarrow, \prec, \wedge, \vee \rangle$ is a complete lattice. That is, for any subset $S \subseteq \Delta_n^\downarrow$, both the infimum $\bigwedge S$ and the supremum $\bigvee S$ exist and belong to $\Delta_n^\downarrow$.
\end{theorem}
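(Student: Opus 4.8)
The plan is to transport the problem to the pointwise order on partial-sum sequences, where the lattice operations become explicit, building on the fact (Proposition~\ref{prop:delta_n_poset}) that $\langle\Delta_n^\downarrow,\prec\rangle$ is already a poset. To each $\bm{x}\in\Delta_n^\downarrow$ I attach its partial-sum function $f_{\bm{x}}(k)=\sum_{i=1}^{k}x_i$, $k=0,1,\dots,n$. Because the entries of $\bm{x}$ are non-negative and non-increasing, $f_{\bm{x}}$ is concave on $\{0,\dots,n\}$ (its increments are the non-increasing numbers $x_k$) with $f_{\bm{x}}(0)=0$ and $f_{\bm{x}}(n)=1$; conversely any concave $g$ on $\{0,\dots,n\}$ with $g(0)=0$, $g(n)=1$ and $g(n)-g(n-1)\ge 0$ is the partial-sum function of a unique member of $\Delta_n^\downarrow$, since concavity makes the last increment the smallest, so all increments are non-negative and they sum to $1$. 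By the defining inequalities of majorization [Eq.~(\ref{eq:def_majorization_ineq})], $\bm{x}\prec\bm{y}$ if and only if $f_{\bm{x}}(k)\le f_{\bm{y}}(k)$ for all $k$. Thus $\langle\Delta_n^\downarrow,\prec\rangle$ is order-isomorphic to this family of concave functions under the pointwise order, and it suffices to exhibit meets and joins there.

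For the infimum of a nonempty $S\subseteq\Delta_n^\downarrow$, I set $g(k)=\inf_{\bm{x}\in S}f_{\bm{x}}(k)$. A pointwise infimum of concave functions is concave, and $g(0)=0$, $g(n)=1$, $g(n)-g(n-1)=\sup_{\bm{x}\in S}x_n\ge 0$, so $g=f_{\bm{z}}$ for a unique $\bm{z}\in\Delta_n^\downarrow$; one checks at once that $\bm{z}\prec\bm{x}$ for all $\bm{x}\in S$ and that $\bm{w}\prec\bm{z}$ for every lower bound $\bm{w}$, so $\bm{z}=\bigwedge S$. For the supremum I begin with $h_0(k)=\sup_{\bm{x}\in S}f_{\bm{x}}(k)\in[0,1]$, which has $h_0(0)=0$ and $h_0(n)=1$ but, and this is the one genuinely delicate point, need not be concave once $n\ge 4$. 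I replace it by its least concave majorant $h$ on $\{0,\dots,n\}$, namely the pointwise infimum of all affine functions that dominate $h_0$ (equivalently the upper boundary of the convex hull of the points $(k,h_0(k))$); this exists because the constant function $1$ is such a majorant.

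The real work is to verify that $h$ is again an admissible partial-sum function. Being an infimum of affine functions, $h$ is concave; since $h_0\le 1$, comparing with the constant majorant $1$ gives $h\le 1$ and hence $h(n)=1$ (as $h\ge h_0$ and $h_0(n)=1$); and the affine majorant $k\mapsto\bigl(\max_{j\ge 1}h_0(j)/j\bigr)k$, which vanishes at $0$, forces $h(0)\le 0$, hence $h(0)=0$. Then $h(n)-h(n-1)\ge 0$, and by concavity every increment of the associated vector is non-negative, so $h=f_{\bm{z}'}$ with $\bm{z}'\in\Delta_n^\downarrow$. Since $h\ge h_0\ge f_{\bm{x}}$ for all $\bm{x}\in S$, $\bm{z}'$ is an upper bound of $S$; and any upper bound $\bm{w}$ has $f_{\bm{w}}$ concave with $f_{\bm{w}}\ge h_0$, whence $f_{\bm{w}}\ge h$ by minimality of the concave majorant, i.e.\ $\bm{z}'\prec\bm{w}$. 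Thus $\bm{z}'=\bigvee S$. (For $S=\emptyset$ one takes the bottom and top elements $(1/n,\dots,1/n)$ and $(1,0,\dots,0)$, which lie in $\Delta_n^\downarrow$.) Applying the construction to two-element $S$ gives binary meets and joins, so $\langle\Delta_n^\downarrow,\prec,\wedge,\vee\rangle$ is a lattice, and the same construction for arbitrary $S$ makes it complete. The main obstacle is precisely the non-concavity of $h_0$ and the membership check just sketched for its concave envelope; the infimum side and the remaining verifications are routine. (This completeness is also established in Refs.~\cite{alberti1982,bapat1991}.)
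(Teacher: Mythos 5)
Your proof is correct, and it is worth noting that the paper itself does not prove this theorem at all: it simply cites Alberti--Uhlmann, Bapat (Lemma~3), and Bosyk \emph{et al.} (Proposition~1). What you have written is essentially a self-contained reconstruction of the argument in those references: pass to cumulative sums, observe that $\prec$ becomes the pointwise order on concave functions on $\{0,\dots,n\}$ pinned at $(0,0)$ and $(n,1)$ with nonnegative last increment, take the pointwise infimum directly for the meet (concavity is preserved under pointwise infima), and for the join repair the generally non-concave pointwise supremum by its least concave majorant. You correctly isolate the only delicate point --- that the concave envelope $h$ of $h_0$ is again an admissible partial-sum function --- and your verifications there are sound: $h(n)=1$ from the constant majorant, $h(0)=0$ from the linear majorant $k\mapsto\bigl(\max_{j\ge1}h_0(j)/j\bigr)k$, nonnegativity of all increments from concavity plus $h(n)-h(n-1)=1-h(n-1)\ge0$, and the least-upper-bound property from minimality of the concave majorant among concave functions dominating $h_0$. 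Your handling of the infimum, of infinite $S$ (everything lives on the finite domain $\{0,\dots,n\}$, so the pointwise $\inf$/$\sup$ and the convex hull of $n+1$ points are unproblematic), and of the empty set is also fine. The net effect of your approach is that the theorem becomes genuinely self-contained rather than outsourced, at the cost of about a page; the paper's citation buys brevity. One cosmetic remark: when you dispose of $S=\emptyset$ you should state explicitly that $\bigvee\emptyset$ is the \emph{bottom} element $(1/n,\dots,1/n)$ and $\bigwedge\emptyset$ is the \emph{top} element $(1,0,\dots,0)$, since the current phrasing leaves the pairing implicit.
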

\begin{proof}
See Ref.~\cite{alberti1982}, p.~55, Ref.~\cite{bapat1991}, Lemma~3, and Ref.~\cite{bosyk2019}, Proposition~1, for detailed proofs.
\end{proof}
\begin{corollary}\label{cor:majorization_lattice_bounded}
The majorization lattice is bounded, with $(1,0,\ldots,0)$ as the top element and $(\frac{1}{n},\frac{1}{n},\ldots,\frac{1}{n})$ as the bottom element. That is, for all $\bm{x} \in \Delta_n^\downarrow$,
\begin{equation}
(\frac{1}{n},\frac{1}{n},\ldots,\frac{1}{n}) \prec \bm{x} \prec (1,0,\ldots,0).
\end{equation}
\end{corollary}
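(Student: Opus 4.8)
The plan is to combine the abstract lattice input already established with two short direct verifications. By Theorem~\ref{thm:majorization_lattice} the poset $\langle \Delta_n^\downarrow, \prec\rangle$ is a complete lattice, and by the Proposition asserting that every complete lattice is bounded, a top element $\top$ and a bottom element $\bot$ exist in $\Delta_n^\downarrow$; by antisymmetry of $\prec$ (Proposition~\ref{prop:delta_n_poset}) they are unique. It therefore suffices to show that the two candidate vectors have the defining properties of $\top$ and $\bot$, i.e.\ to prove the displayed chain $(\frac1n,\ldots,\frac1n)\prec\bm{x}\prec(1,0,\ldots,0)$ for an arbitrary $\bm{x}\in\Delta_n^\downarrow$. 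One should note in passing that $(1,0,\ldots,0)$ and $(\frac1n,\ldots,\frac1n)$ indeed belong to $\Delta_n^\downarrow$, which is immediate from the definition.

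For the upper bound I would unwind the definition of majorization. Since $\bm{x}\in\Delta_n^\downarrow$, its ordered components satisfy $x_i^\downarrow\ge 0$ and $\sum_{i=1}^n x_i^\downarrow=1$, so every partial sum obeys $\sum_{i=1}^k x_i^\downarrow\le\sum_{i=1}^n x_i^\downarrow=1$ for $k=1,\dots,n-1$; the ordered version of $(1,0,\dots,0)$ has all these partial sums equal to $1$, and the total sums agree, so $\bm{x}\prec(1,0,\dots,0)$ follows directly from Eq.~(\ref{eq:def_majorization_ineq}).

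For the lower bound the only point to check is $\sum_{i=1}^k x_i^\downarrow\ge k/n$ for $k=1,\dots,n-1$. I would argue that the mean of the $k$ largest components dominates the global mean: because $x_1^\downarrow\ge\cdots\ge x_k^\downarrow\ge x_{k+1}^\downarrow\ge\cdots\ge x_n^\downarrow$, we have $\frac1k\sum_{i=1}^k x_i^\downarrow\ge\frac1n\sum_{i=1}^n x_i^\downarrow=\frac1n$, and multiplying through by $k$ gives the desired inequality. Combined with $\sum_i x_i=\sum_i\frac1n=1$, this is exactly $(\frac1n,\ldots,\frac1n)\prec\bm{x}$, completing the proof.

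I do not anticipate a real obstacle: the entire content is the two partial-sum estimates, each following from nonnegativity, normalization, and the non-increasing ordering of $\bm{x}^\downarrow$. The only points requiring a word of care are the membership of the two extremal vectors in $\Delta_n^\downarrow$ and the appeal to uniqueness of $\top$ and $\bot$ via antisymmetry; alternatively, one may bypass the lattice machinery entirely and simply take the displayed majorization chain, proved as above, as the verification that $\langle\Delta_n^\downarrow,\prec\rangle$ is bounded.
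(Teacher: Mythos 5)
Your proof is correct. The paper itself gives no explicit argument for this corollary: it is presented as an immediate consequence of Theorem~\ref{thm:majorization_lattice} together with the proposition that every complete lattice is bounded, and the identification of the specific top and bottom elements is left to the reader (the same chain appears earlier in the main text as an unproved ``sanity check''). What you add is precisely the missing content: the abstract lattice argument only guarantees that \emph{some} top and bottom exist, and it is your two partial-sum verifications that pin them down as $(1,0,\ldots,0)$ and $(\frac1n,\ldots,\frac1n)$. Both verifications are sound: the upper bound is immediate from nonnegativity and normalization, and the lower bound correctly uses the fact that the average of the $k$ largest components of a non-increasing sequence dominates the overall average, so $\sum_{i=1}^{k}x_i^\downarrow\ge k/n$. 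Your closing remark is also right that the lattice machinery can be bypassed entirely --- the displayed majorization chain by itself establishes boundedness directly, and appealing to completeness plus antisymmetry buys only the (unneeded) uniqueness of the extremal elements. If anything, your elementary route is the more self-contained of the two.
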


\section{Majorization lattice for multiple waves}\label{SI-Sec:multiple_waves}

In the main text, we have applied the majorization order to compare the coherence between pairs of waves. In this section, we extend this comparison to multiple waves. Consider a set of waves 
\begin{equation}
\{\rho_{k}\} = \{\rho_1,\rho_2,\rho_3,\ldots\}.
\end{equation}
We aim to identify the set of waves less coherent than all $\rho_{k}$, denoted as $\{\rho_{l}\}$, and the set of waves more coherent than all $\rho_{k}$, denoted as $\{\rho_{u}\}$. The answer is: 
\begin{align}
\{\rho_{l}\} &= \{\rho : \bm{\lambda}^{\downarrow}(\rho) \precneqq \bm{\lambda}^{\downarrow}_{\inf}\}, \\
\{\rho_{u}\} &= \{\rho : \bm{\lambda}^{\downarrow}_{\sup} \precneqq \bm{\lambda}^{\downarrow}(\rho) \}.    
\end{align}
where $\bm{\lambda}^{\downarrow}_{\sup}$ and $\bm{\lambda}^\downarrow_{\inf}$ are the \emph{supremum} and \emph{infimum} of $\{\bm{\lambda}^\downarrow(\rho_{k})\}$ with respect to majorization, respectively; both values can be determined using known algorithms~\cite{bosyk2019}. Each of these sets is completely determined by a single strict majorization condition because the partially ordered set $\langle \Delta^{\downarrow}_{n}, \prec \rangle$ is a \emph{complete lattice}. 

The existence of supremum and infimum is invaluable in deriving fundamental bounds on physical responses for various coherence levels. For example, consider five waves $\rho_{l}$, $\rho_{\inf}$, $\rho_{k}$, $\rho_{\sup}$, and $\rho_{u}$ as indicated above. Then, 
\begin{equation}
\bm{\lambda}^{\downarrow}(\rho_{l}) \precneqq \bm{\lambda}^{\downarrow}_{\inf} \prec \bm{\lambda}^{\downarrow}(\rho_{k}) \prec \bm{\lambda}^{\downarrow}_{\sup} \precneqq \bm{\lambda}^{\downarrow}(\rho_{u}).    
\end{equation}
In the power distribution measurement,
\begin{equation}
\{ \bm{d} \}_{l} \subsetneqq \{ \bm{d} \}_{\inf} \subseteq \{ \bm{d} \}_{k} \subseteq \{ \bm{d} \}_{\sup} \subsetneqq \{ \bm{d} \}_{u}.    
\end{equation}
In the absorption measurement, for any $S\in M_{n}$,
\begin{equation}
\{ \alpha \}_{S,l} \subseteq \{ \alpha\}_{S,\inf} \subseteq \{ \alpha \}_{S,k} \subseteq \{ \alpha \}_{S,\sup} \subseteq \{ \alpha \}_{S,u}.    
\end{equation}
In the transmission measurement, for any $t\in M_{n}$,
\begin{equation}
\{ T \}_{t,l} \subseteq \{ T \}_{t,\inf} \subseteq \{ T \}_{t,k} \subseteq \{ T \}_{t,\sup} \subseteq \{ T \}_{t,u}.    
\end{equation}


\bibliography{main}

\begin{thebibliography}{87}%
\makeatletter
\providecommand \@ifxundefined [1]{%
 \@ifx{#1\undefined}
}%
\providecommand \@ifnum [1]{%
 \ifnum #1\expandafter \@firstoftwo
 \else \expandafter \@secondoftwo
 \fi
}%
\providecommand \@ifx [1]{%
 \ifx #1\expandafter \@firstoftwo
 \else \expandafter \@secondoftwo
 \fi
}%
\providecommand \natexlab [1]{#1}%
\providecommand \enquote  [1]{``#1''}%
\providecommand \bibnamefont  [1]{#1}%
\providecommand \bibfnamefont [1]{#1}%
\providecommand \citenamefont [1]{#1}%
\providecommand \href@noop [0]{\@secondoftwo}%
\providecommand \href [0]{\begingroup \@sanitize@url \@href}%
\providecommand \@href[1]{\@@startlink{#1}\@@href}%
\providecommand \@@href[1]{\endgroup#1\@@endlink}%
\providecommand \@sanitize@url [0]{\catcode `\\12\catcode `\$12\catcode `\&12\catcode `\#12\catcode `\^12\catcode `\_12\catcode `\%12\relax}%
\providecommand \@@startlink[1]{}%
\providecommand \@@endlink[0]{}%
\providecommand \url  [0]{\begingroup\@sanitize@url \@url }%
\providecommand \@url [1]{\endgroup\@href {#1}{\urlprefix }}%
\providecommand \urlprefix  [0]{URL }%
\providecommand \Eprint [0]{\href }%
\providecommand \doibase [0]{https://doi.org/}%
\providecommand \selectlanguage [0]{\@gobble}%
\providecommand \bibinfo  [0]{\@secondoftwo}%
\providecommand \bibfield  [0]{\@secondoftwo}%
\providecommand \translation [1]{[#1]}%
\providecommand \BibitemOpen [0]{}%
\providecommand \bibitemStop [0]{}%
\providecommand \bibitemNoStop [0]{.\EOS\space}%
\providecommand \EOS [0]{\spacefactor3000\relax}%
\providecommand \BibitemShut  [1]{\csname bibitem#1\endcsname}%
\let\auto@bib@innerbib\@empty
\bibitem [{\citenamefont {Born}\ and\ \citenamefont {Wolf}(1999)}]{born1999}%
  \BibitemOpen
  \bibfield  {author} {\bibinfo {author} {\bibfnamefont {M.}~\bibnamefont {Born}}\ and\ \bibinfo {author} {\bibfnamefont {E.}~\bibnamefont {Wolf}},\ }\href@noop {} {\emph {\bibinfo {title} {Principles of Optics: Electromagnetic Theory of Propagation, Interference and Diffraction of Light}}},\ \bibinfo {edition} {7th}\ ed.\ (\bibinfo  {publisher} {Cambridge University Press},\ \bibinfo {address} {Cambridge ; New York},\ \bibinfo {year} {1999})\BibitemShut {NoStop}%
\bibitem [{\citenamefont {Goodman}(2000)}]{goodman2000}%
  \BibitemOpen
  \bibfield  {author} {\bibinfo {author} {\bibfnamefont {J.~W.}\ \bibnamefont {Goodman}},\ }\href@noop {} {\emph {\bibinfo {title} {Statistical Optics}}},\ \bibinfo {edition} {wiley classics library ed}\ ed.,\ Wiley Classics Library\ (\bibinfo  {publisher} {Wiley},\ \bibinfo {address} {New York},\ \bibinfo {year} {2000})\BibitemShut {NoStop}%
\bibitem [{\citenamefont {Mandel}\ and\ \citenamefont {Wolf}(1995)}]{mandel1995}%
  \BibitemOpen
  \bibfield  {author} {\bibinfo {author} {\bibfnamefont {L.}~\bibnamefont {Mandel}}\ and\ \bibinfo {author} {\bibfnamefont {E.}~\bibnamefont {Wolf}},\ }\href@noop {} {\emph {\bibinfo {title} {Optical Coherence and Quantum Optics}}}\ (\bibinfo  {publisher} {Cambridge University Press},\ \bibinfo {address} {Cambridge ; New York},\ \bibinfo {year} {1995})\BibitemShut {NoStop}%
\bibitem [{\citenamefont {O'Neill}(2003)}]{oneill2003}%
  \BibitemOpen
  \bibfield  {author} {\bibinfo {author} {\bibfnamefont {E.~L.}\ \bibnamefont {O'Neill}},\ }\href@noop {} {\emph {\bibinfo {title} {Introduction to {{Statistical Optics}}}}}\ (\bibinfo  {publisher} {Courier Corporation},\ \bibinfo {year} {2003})\BibitemShut {NoStop}%
\bibitem [{\citenamefont {Glauber}(1963)}]{glauber1963a}%
  \BibitemOpen
  \bibfield  {author} {\bibinfo {author} {\bibfnamefont {R.~J.}\ \bibnamefont {Glauber}},\ }\bibfield  {title} {\bibinfo {title} {The {{Quantum Theory}} of {{Optical Coherence}}},\ }\href {https://doi.org/10.1103/PhysRev.130.2529} {\bibfield  {journal} {\bibinfo  {journal} {Physical Review}\ }\textbf {\bibinfo {volume} {130}},\ \bibinfo {pages} {2529} (\bibinfo {year} {1963})}\BibitemShut {NoStop}%
\bibitem [{\citenamefont {Mandel}\ and\ \citenamefont {Wolf}(1965)}]{mandel1965}%
  \BibitemOpen
  \bibfield  {author} {\bibinfo {author} {\bibfnamefont {L.}~\bibnamefont {Mandel}}\ and\ \bibinfo {author} {\bibfnamefont {E.}~\bibnamefont {Wolf}},\ }\bibfield  {title} {\bibinfo {title} {Coherence {{Properties}} of {{Optical Fields}}},\ }\href {https://doi.org/10.1103/RevModPhys.37.231} {\bibfield  {journal} {\bibinfo  {journal} {Reviews of Modern Physics}\ }\textbf {\bibinfo {volume} {37}},\ \bibinfo {pages} {231} (\bibinfo {year} {1965})}\BibitemShut {NoStop}%
\bibitem [{\citenamefont {Perina}(1985)}]{perina1985}%
  \BibitemOpen
  \bibfield  {author} {\bibinfo {author} {\bibfnamefont {J.}~\bibnamefont {Perina}},\ }\href@noop {} {\emph {\bibinfo {title} {Coherence of {{Light}}}}}\ (\bibinfo  {publisher} {Springer Science \& Business Media},\ \bibinfo {year} {1985})\BibitemShut {NoStop}%
\bibitem [{\citenamefont {Korotkova}(2022)}]{korotkova2022}%
  \BibitemOpen
  \bibfield  {author} {\bibinfo {author} {\bibfnamefont {O.}~\bibnamefont {Korotkova}},\ }\href@noop {} {\emph {\bibinfo {title} {Theoretical Statistical Optics}}}\ (\bibinfo  {publisher} {World Scientific},\ \bibinfo {address} {New Jersey London Singapore Beijing Shanghai Hong Kong Taipei Chennai Tokyo},\ \bibinfo {year} {2022})\BibitemShut {NoStop}%
\bibitem [{\citenamefont {Wolf}(2007)}]{wolf2007}%
  \BibitemOpen
  \bibfield  {author} {\bibinfo {author} {\bibfnamefont {E.}~\bibnamefont {Wolf}},\ }\href@noop {} {\emph {\bibinfo {title} {Introduction to the Theory of Coherence and Polarization of Light}}}\ (\bibinfo  {publisher} {Cambridge University Press},\ \bibinfo {address} {Cambridge},\ \bibinfo {year} {2007})\BibitemShut {NoStop}%
\bibitem [{\citenamefont {Laue}(1907)}]{laue1907}%
  \BibitemOpen
  \bibfield  {author} {\bibinfo {author} {\bibfnamefont {M.}~\bibnamefont {Laue}},\ }\bibfield  {title} {\bibinfo {title} {Die {{Entropie}} von partiell koh{\"a}renten {{Strahlenb{\"u}ndeln}}},\ }\href {https://doi.org/10.1002/andp.19073280602} {\bibfield  {journal} {\bibinfo  {journal} {Annalen der Physik}\ }\textbf {\bibinfo {volume} {328}},\ \bibinfo {pages} {1} (\bibinfo {year} {1907})}\BibitemShut {NoStop}%
\bibitem [{\citenamefont {Gamo}(1964{\natexlab{a}})}]{gamo1964}%
  \BibitemOpen
  \bibfield  {author} {\bibinfo {author} {\bibfnamefont {H.}~\bibnamefont {Gamo}},\ }\bibfield  {title} {\bibinfo {title} {Thermodynamic {{Entropy}} of {{Partially Coherent Light Beams}}},\ }\href {https://doi.org/10.1143/JPSJ.19.1955} {\bibfield  {journal} {\bibinfo  {journal} {Journal of the Physical Society of Japan}\ }\textbf {\bibinfo {volume} {19}},\ \bibinfo {pages} {1955} (\bibinfo {year} {1964}{\natexlab{a}})}\BibitemShut {NoStop}%
\bibitem [{\citenamefont {Zernike}(1938)}]{zernike1938}%
  \BibitemOpen
  \bibfield  {author} {\bibinfo {author} {\bibfnamefont {F.}~\bibnamefont {Zernike}},\ }\bibfield  {title} {\bibinfo {title} {The concept of degree of coherence and its application to optical problems},\ }\href {https://doi.org/10.1016/S0031-8914(38)80203-2} {\bibfield  {journal} {\bibinfo  {journal} {Physica}\ }\textbf {\bibinfo {volume} {5}},\ \bibinfo {pages} {785} (\bibinfo {year} {1938})}\BibitemShut {NoStop}%
\bibitem [{\citenamefont {Gamo}(1957)}]{gamo1957}%
  \BibitemOpen
  \bibfield  {author} {\bibinfo {author} {\bibfnamefont {H.}~\bibnamefont {Gamo}},\ }\bibfield  {title} {\bibinfo {title} {Intensity {{Matrix}} and {{Degree}} of {{Coherence}}},\ }\href {https://doi.org/10.1364/JOSA.47.000976} {\bibfield  {journal} {\bibinfo  {journal} {JOSA}\ }\textbf {\bibinfo {volume} {47}},\ \bibinfo {pages} {976} (\bibinfo {year} {1957})}\BibitemShut {NoStop}%
\bibitem [{\citenamefont {Karczewski}(1963)}]{karczewski1963a}%
  \BibitemOpen
  \bibfield  {author} {\bibinfo {author} {\bibfnamefont {B.}~\bibnamefont {Karczewski}},\ }\bibfield  {title} {\bibinfo {title} {Degree of coherence of the electromagnetic field},\ }\href {https://doi.org/10.1016/S0375-9601(63)96329-1} {\bibfield  {journal} {\bibinfo  {journal} {Physics Letters}\ }\textbf {\bibinfo {volume} {5}},\ \bibinfo {pages} {191} (\bibinfo {year} {1963})}\BibitemShut {NoStop}%
\bibitem [{\citenamefont {Tervo}\ \emph {et~al.}(2003)\citenamefont {Tervo}, \citenamefont {Set{\"a}l{\"a}},\ and\ \citenamefont {Friberg}}]{tervo2003}%
  \BibitemOpen
  \bibfield  {author} {\bibinfo {author} {\bibfnamefont {J.}~\bibnamefont {Tervo}}, \bibinfo {author} {\bibfnamefont {T.}~\bibnamefont {Set{\"a}l{\"a}}},\ and\ \bibinfo {author} {\bibfnamefont {A.~T.}\ \bibnamefont {Friberg}},\ }\bibfield  {title} {\bibinfo {title} {Degree of coherence for electromagnetic fields},\ }\href {https://doi.org/10.1364/OE.11.001137} {\bibfield  {journal} {\bibinfo  {journal} {Optics Express}\ }\textbf {\bibinfo {volume} {11}},\ \bibinfo {pages} {1137} (\bibinfo {year} {2003})}\BibitemShut {NoStop}%
\bibitem [{\citenamefont {R{\'e}fr{\'e}gier}\ and\ \citenamefont {Goudail}(2005)}]{refregier2005}%
  \BibitemOpen
  \bibfield  {author} {\bibinfo {author} {\bibfnamefont {P.}~\bibnamefont {R{\'e}fr{\'e}gier}}\ and\ \bibinfo {author} {\bibfnamefont {F.}~\bibnamefont {Goudail}},\ }\bibfield  {title} {\bibinfo {title} {Invariant degrees of coherence of partially polarized light},\ }\href {https://doi.org/10.1364/OPEX.13.006051} {\bibfield  {journal} {\bibinfo  {journal} {Optics Express}\ }\textbf {\bibinfo {volume} {13}},\ \bibinfo {pages} {6051} (\bibinfo {year} {2005})}\BibitemShut {NoStop}%
\bibitem [{\citenamefont {Set{\"a}l{\"a}}\ \emph {et~al.}(2006)\citenamefont {Set{\"a}l{\"a}}, \citenamefont {Tervo},\ and\ \citenamefont {Friberg}}]{setala2006}%
  \BibitemOpen
  \bibfield  {author} {\bibinfo {author} {\bibfnamefont {T.}~\bibnamefont {Set{\"a}l{\"a}}}, \bibinfo {author} {\bibfnamefont {J.}~\bibnamefont {Tervo}},\ and\ \bibinfo {author} {\bibfnamefont {A.~T.}\ \bibnamefont {Friberg}},\ }\bibfield  {title} {\bibinfo {title} {Contrasts of {{Stokes}} parameters in {{Young}}'s interference experiment and electromagnetic degree of coherence},\ }\href {https://doi.org/10.1364/OL.31.002669} {\bibfield  {journal} {\bibinfo  {journal} {Optics Letters}\ }\textbf {\bibinfo {volume} {31}},\ \bibinfo {pages} {2669} (\bibinfo {year} {2006})}\BibitemShut {NoStop}%
\bibitem [{\citenamefont {Luis}(2007)}]{luis2007}%
  \BibitemOpen
  \bibfield  {author} {\bibinfo {author} {\bibfnamefont {A.}~\bibnamefont {Luis}},\ }\bibfield  {title} {\bibinfo {title} {Degree of coherence for vectorial electromagnetic fields as the distance between correlation matrices},\ }\href {https://doi.org/10.1364/JOSAA.24.001063} {\bibfield  {journal} {\bibinfo  {journal} {JOSA A}\ }\textbf {\bibinfo {volume} {24}},\ \bibinfo {pages} {1063} (\bibinfo {year} {2007})}\BibitemShut {NoStop}%
\bibitem [{\citenamefont {Chitambar}\ and\ \citenamefont {Gour}(2019)}]{chitambar2019}%
  \BibitemOpen
  \bibfield  {author} {\bibinfo {author} {\bibfnamefont {E.}~\bibnamefont {Chitambar}}\ and\ \bibinfo {author} {\bibfnamefont {G.}~\bibnamefont {Gour}},\ }\bibfield  {title} {\bibinfo {title} {Quantum resource theories},\ }\href {https://doi.org/10.1103/RevModPhys.91.025001} {\bibfield  {journal} {\bibinfo  {journal} {Reviews of Modern Physics}\ }\textbf {\bibinfo {volume} {91}},\ \bibinfo {pages} {025001} (\bibinfo {year} {2019})}\BibitemShut {NoStop}%
\bibitem [{\citenamefont {Torun}\ \emph {et~al.}(2023)\citenamefont {Torun}, \citenamefont {Pusuluk},\ and\ \citenamefont {M{\"u}stecaplio{\u g}lu}}]{torun2023}%
  \BibitemOpen
  \bibfield  {author} {\bibinfo {author} {\bibfnamefont {G.}~\bibnamefont {Torun}}, \bibinfo {author} {\bibfnamefont {O.}~\bibnamefont {Pusuluk}},\ and\ \bibinfo {author} {\bibfnamefont {{\"O}.~E.}\ \bibnamefont {M{\"u}stecaplio{\u g}lu}},\ }\bibfield  {title} {\bibinfo {title} {A compendious review of majorization-based resource theories: Quantum information and quantum thermodynamics},\ }\href {https://doi.org/10.55730/1300-0101.2744} {\bibfield  {journal} {\bibinfo  {journal} {Turkish Journal of Physics}\ }\textbf {\bibinfo {volume} {47}},\ \bibinfo {pages} {141} (\bibinfo {year} {2023})}\BibitemShut {NoStop}%
\bibitem [{\citenamefont {Streltsov}\ \emph {et~al.}(2017)\citenamefont {Streltsov}, \citenamefont {Adesso},\ and\ \citenamefont {Plenio}}]{streltsov2017}%
  \BibitemOpen
  \bibfield  {author} {\bibinfo {author} {\bibfnamefont {A.}~\bibnamefont {Streltsov}}, \bibinfo {author} {\bibfnamefont {G.}~\bibnamefont {Adesso}},\ and\ \bibinfo {author} {\bibfnamefont {M.~B.}\ \bibnamefont {Plenio}},\ }\bibfield  {title} {\bibinfo {title} {Colloquium: {{Quantum}} coherence as a resource},\ }\href {https://doi.org/10.1103/RevModPhys.89.041003} {\bibfield  {journal} {\bibinfo  {journal} {Reviews of Modern Physics}\ }\textbf {\bibinfo {volume} {89}},\ \bibinfo {pages} {041003} (\bibinfo {year} {2017})}\BibitemShut {NoStop}%
\bibitem [{\citenamefont {Baumgratz}\ \emph {et~al.}(2014)\citenamefont {Baumgratz}, \citenamefont {Cramer},\ and\ \citenamefont {Plenio}}]{baumgratz2014}%
  \BibitemOpen
  \bibfield  {author} {\bibinfo {author} {\bibfnamefont {T.}~\bibnamefont {Baumgratz}}, \bibinfo {author} {\bibfnamefont {M.}~\bibnamefont {Cramer}},\ and\ \bibinfo {author} {\bibfnamefont {M.~B.}\ \bibnamefont {Plenio}},\ }\bibfield  {title} {\bibinfo {title} {Quantifying {{Coherence}}},\ }\href {https://doi.org/10.1103/PhysRevLett.113.140401} {\bibfield  {journal} {\bibinfo  {journal} {Physical Review Letters}\ }\textbf {\bibinfo {volume} {113}},\ \bibinfo {pages} {140401} (\bibinfo {year} {2014})}\BibitemShut {NoStop}%
\bibitem [{\citenamefont {Winter}\ and\ \citenamefont {Yang}(2016)}]{winter2016}%
  \BibitemOpen
  \bibfield  {author} {\bibinfo {author} {\bibfnamefont {A.}~\bibnamefont {Winter}}\ and\ \bibinfo {author} {\bibfnamefont {D.}~\bibnamefont {Yang}},\ }\bibfield  {title} {\bibinfo {title} {Operational {{Resource Theory}} of {{Coherence}}},\ }\href {https://doi.org/10.1103/PhysRevLett.116.120404} {\bibfield  {journal} {\bibinfo  {journal} {Physical Review Letters}\ }\textbf {\bibinfo {volume} {116}},\ \bibinfo {pages} {120404} (\bibinfo {year} {2016})}\BibitemShut {NoStop}%
\bibitem [{\citenamefont {Nielsen}(1999)}]{nielsen1999}%
  \BibitemOpen
  \bibfield  {author} {\bibinfo {author} {\bibfnamefont {M.~A.}\ \bibnamefont {Nielsen}},\ }\bibfield  {title} {\bibinfo {title} {Conditions for a {{Class}} of {{Entanglement Transformations}}},\ }\href {https://doi.org/10.1103/PhysRevLett.83.436} {\bibfield  {journal} {\bibinfo  {journal} {Physical Review Letters}\ }\textbf {\bibinfo {volume} {83}},\ \bibinfo {pages} {436} (\bibinfo {year} {1999})}\BibitemShut {NoStop}%
\bibitem [{\citenamefont {Gour}\ \emph {et~al.}(2015)\citenamefont {Gour}, \citenamefont {M{\"u}ller}, \citenamefont {Narasimhachar}, \citenamefont {Spekkens},\ and\ \citenamefont {Yunger~Halpern}}]{gour2015}%
  \BibitemOpen
  \bibfield  {author} {\bibinfo {author} {\bibfnamefont {G.}~\bibnamefont {Gour}}, \bibinfo {author} {\bibfnamefont {M.~P.}\ \bibnamefont {M{\"u}ller}}, \bibinfo {author} {\bibfnamefont {V.}~\bibnamefont {Narasimhachar}}, \bibinfo {author} {\bibfnamefont {R.~W.}\ \bibnamefont {Spekkens}},\ and\ \bibinfo {author} {\bibfnamefont {N.}~\bibnamefont {Yunger~Halpern}},\ }\bibfield  {title} {\bibinfo {title} {The resource theory of informational nonequilibrium in thermodynamics},\ }\href {https://doi.org/10.1016/j.physrep.2015.04.003} {\bibfield  {journal} {\bibinfo  {journal} {Physics Reports}\ }\bibinfo {series} {The Resource Theory of Informational Nonequilibrium in Thermodynamics},\ \textbf {\bibinfo {volume} {583}},\ \bibinfo {pages} {1} (\bibinfo {year} {2015})}\BibitemShut {NoStop}%
\bibitem [{\citenamefont {Luis}(2016)}]{luis2016}%
  \BibitemOpen
  \bibfield  {author} {\bibinfo {author} {\bibfnamefont {A.}~\bibnamefont {Luis}},\ }\bibfield  {title} {\bibinfo {title} {Coherence for vectorial waves and majorization},\ }\href {https://doi.org/10.1364/OL.41.005190} {\bibfield  {journal} {\bibinfo  {journal} {Optics Letters}\ }\textbf {\bibinfo {volume} {41}},\ \bibinfo {pages} {5190} (\bibinfo {year} {2016})}\BibitemShut {NoStop}%
\bibitem [{\citenamefont {Bengtsson}\ and\ \citenamefont {{\.Z}yczkowski}(2017)}]{bengtsson2017}%
  \BibitemOpen
  \bibfield  {author} {\bibinfo {author} {\bibfnamefont {I.}~\bibnamefont {Bengtsson}}\ and\ \bibinfo {author} {\bibfnamefont {K.}~\bibnamefont {{\.Z}yczkowski}},\ }\href {https://doi.org/10.1017/9781139207010} {\emph {\bibinfo {title} {Geometry of {{Quantum States}}: {{An Introduction}} to {{Quantum Entanglement}}}}},\ \bibinfo {edition} {2nd}\ ed.\ (\bibinfo  {publisher} {Cambridge University Press},\ \bibinfo {year} {2017})\BibitemShut {NoStop}%
\bibitem [{\citenamefont {Gour}\ \emph {et~al.}(2018)\citenamefont {Gour}, \citenamefont {Jennings}, \citenamefont {Buscemi}, \citenamefont {Duan},\ and\ \citenamefont {Marvian}}]{gour2018}%
  \BibitemOpen
  \bibfield  {author} {\bibinfo {author} {\bibfnamefont {G.}~\bibnamefont {Gour}}, \bibinfo {author} {\bibfnamefont {D.}~\bibnamefont {Jennings}}, \bibinfo {author} {\bibfnamefont {F.}~\bibnamefont {Buscemi}}, \bibinfo {author} {\bibfnamefont {R.}~\bibnamefont {Duan}},\ and\ \bibinfo {author} {\bibfnamefont {I.}~\bibnamefont {Marvian}},\ }\bibfield  {title} {\bibinfo {title} {Quantum majorization and a complete set of entropic conditions for quantum thermodynamics},\ }\href {https://doi.org/10.1038/s41467-018-06261-7} {\bibfield  {journal} {\bibinfo  {journal} {Nature Communications}\ }\textbf {\bibinfo {volume} {9}},\ \bibinfo {pages} {5352} (\bibinfo {year} {2018})}\BibitemShut {NoStop}%
\bibitem [{\citenamefont {Marshall}\ \emph {et~al.}(2011)\citenamefont {Marshall}, \citenamefont {Olkin},\ and\ \citenamefont {Arnold}}]{marshall2011}%
  \BibitemOpen
  \bibfield  {author} {\bibinfo {author} {\bibfnamefont {A.~W.}\ \bibnamefont {Marshall}}, \bibinfo {author} {\bibfnamefont {I.}~\bibnamefont {Olkin}},\ and\ \bibinfo {author} {\bibfnamefont {B.~C.}\ \bibnamefont {Arnold}},\ }\href@noop {} {\emph {\bibinfo {title} {Inequalities: Theory of Majorization and Its Applications}}},\ \bibinfo {edition} {2nd}\ ed.\ (\bibinfo  {publisher} {{Springer Science+Business Media, LLC}},\ \bibinfo {address} {{New York}},\ \bibinfo {year} {2011})\BibitemShut {NoStop}%
\bibitem [{\citenamefont {Abouraddy}\ \emph {et~al.}(2014)\citenamefont {Abouraddy}, \citenamefont {Kagalwala},\ and\ \citenamefont {Saleh}}]{abouraddy2014}%
  \BibitemOpen
  \bibfield  {author} {\bibinfo {author} {\bibfnamefont {A.~F.}\ \bibnamefont {Abouraddy}}, \bibinfo {author} {\bibfnamefont {K.~H.}\ \bibnamefont {Kagalwala}},\ and\ \bibinfo {author} {\bibfnamefont {B.~E.~A.}\ \bibnamefont {Saleh}},\ }\bibfield  {title} {\bibinfo {title} {Two-point optical coherency matrix tomography},\ }\href {https://doi.org/10.1364/OL.39.002411} {\bibfield  {journal} {\bibinfo  {journal} {Optics Letters}\ }\textbf {\bibinfo {volume} {39}},\ \bibinfo {pages} {2411} (\bibinfo {year} {2014})}\BibitemShut {NoStop}%
\bibitem [{\citenamefont {Kagalwala}\ \emph {et~al.}(2015)\citenamefont {Kagalwala}, \citenamefont {Kondakci}, \citenamefont {Abouraddy},\ and\ \citenamefont {Saleh}}]{kagalwala2015}%
  \BibitemOpen
  \bibfield  {author} {\bibinfo {author} {\bibfnamefont {K.~H.}\ \bibnamefont {Kagalwala}}, \bibinfo {author} {\bibfnamefont {H.~E.}\ \bibnamefont {Kondakci}}, \bibinfo {author} {\bibfnamefont {A.~F.}\ \bibnamefont {Abouraddy}},\ and\ \bibinfo {author} {\bibfnamefont {B.~E.~A.}\ \bibnamefont {Saleh}},\ }\bibfield  {title} {\bibinfo {title} {Optical coherency matrix tomography},\ }\href {https://doi.org/10.1038/srep15333} {\bibfield  {journal} {\bibinfo  {journal} {Scientific Reports}\ }\textbf {\bibinfo {volume} {5}},\ \bibinfo {pages} {15333} (\bibinfo {year} {2015})}\BibitemShut {NoStop}%
\bibitem [{\citenamefont {{Roques-Carmes}}\ \emph {et~al.}(2024)\citenamefont {{Roques-Carmes}}, \citenamefont {Fan},\ and\ \citenamefont {Miller}}]{roques-carmes2024}%
  \BibitemOpen
  \bibfield  {author} {\bibinfo {author} {\bibfnamefont {C.}~\bibnamefont {{Roques-Carmes}}}, \bibinfo {author} {\bibfnamefont {S.}~\bibnamefont {Fan}},\ and\ \bibinfo {author} {\bibfnamefont {D.}~\bibnamefont {Miller}},\ }\href@noop {} {\bibinfo {title} {Measuring, processing, and generating partially coherent light with self-configuring optics}} (\bibinfo {year} {2024}),\ \Eprint {https://arxiv.org/abs/2402.00704} {arXiv:2402.00704 [physics, physics:quant-ph]} \BibitemShut {NoStop}%
\bibitem [{\citenamefont {Gabor}(1961)}]{gabor1961}%
  \BibitemOpen
  \bibfield  {author} {\bibinfo {author} {\bibfnamefont {D.}~\bibnamefont {Gabor}},\ }\bibfield  {title} {\bibinfo {title} {{{IV Light}} and {{Information}}},\ }\href {https://doi.org/10.1016/S0079-6638(08)70609-7} {\bibfield  {journal} {\bibinfo  {journal} {Progess in Optics}\ }\textbf {\bibinfo {volume} {1}},\ \bibinfo {pages} {109} (\bibinfo {year} {1961})}\BibitemShut {NoStop}%
\bibitem [{\citenamefont {Sudarshan}(1963)}]{sudarshan1963}%
  \BibitemOpen
  \bibfield  {author} {\bibinfo {author} {\bibfnamefont {E.~C.~G.}\ \bibnamefont {Sudarshan}},\ }\bibfield  {title} {\bibinfo {title} {Equivalence of {{Semiclassical}} and {{Quantum Mechanical Descriptions}} of {{Statistical Light Beams}}},\ }\href {https://doi.org/10.1103/PhysRevLett.10.277} {\bibfield  {journal} {\bibinfo  {journal} {Physical Review Letters}\ }\textbf {\bibinfo {volume} {10}},\ \bibinfo {pages} {277} (\bibinfo {year} {1963})}\BibitemShut {NoStop}%
\bibitem [{\citenamefont {Gamo}(1964{\natexlab{b}})}]{gamo1964a}%
  \BibitemOpen
  \bibfield  {author} {\bibinfo {author} {\bibfnamefont {H.}~\bibnamefont {Gamo}},\ }\bibfield  {title} {\bibinfo {title} {{{III Matrix Treatment}} of {{Partial Coherence}}},\ }in\ \href {https://doi.org/10.1016/S0079-6638(08)70571-7} {\emph {\bibinfo {booktitle} {Progress in {{Optics}}}}},\ Vol.~\bibinfo {volume} {3},\ \bibinfo {editor} {edited by\ \bibinfo {editor} {\bibfnamefont {E.}~\bibnamefont {Wolf}}}\ (\bibinfo  {publisher} {Elsevier},\ \bibinfo {year} {1964})\ pp.\ \bibinfo {pages} {187--332}\BibitemShut {NoStop}%
\bibitem [{\citenamefont {Landau}\ and\ \citenamefont {Lifshitz}(1981)}]{landau1981}%
  \BibitemOpen
  \bibfield  {author} {\bibinfo {author} {\bibfnamefont {L.~D.}\ \bibnamefont {Landau}}\ and\ \bibinfo {author} {\bibfnamefont {L.~M.}\ \bibnamefont {Lifshitz}},\ }\href@noop {} {\emph {\bibinfo {title} {Quantum {{Mechanics}}: {{Non-Relativistic Theory}}}}},\ \bibinfo {edition} {3rd}\ ed.\ (\bibinfo  {publisher} {Butterworth-Heinemann},\ \bibinfo {address} {Singapore},\ \bibinfo {year} {1981})\BibitemShut {NoStop}%
\bibitem [{\citenamefont {Wolf}(2003)}]{wolf2003}%
  \BibitemOpen
  \bibfield  {author} {\bibinfo {author} {\bibfnamefont {E.}~\bibnamefont {Wolf}},\ }\bibfield  {title} {\bibinfo {title} {Unified theory of coherence and polarization of random electromagnetic beams},\ }\href {https://doi.org/10.1016/S0375-9601(03)00684-4} {\bibfield  {journal} {\bibinfo  {journal} {Physics Letters A}\ }\textbf {\bibinfo {volume} {312}},\ \bibinfo {pages} {263} (\bibinfo {year} {2003})}\BibitemShut {NoStop}%
\bibitem [{\citenamefont {{de Lima Bernardo}}(2017)}]{delimabernardo2017}%
  \BibitemOpen
  \bibfield  {author} {\bibinfo {author} {\bibfnamefont {B.}~\bibnamefont {{de Lima Bernardo}}},\ }\bibfield  {title} {\bibinfo {title} {Unified quantum density matrix description of coherence and polarization},\ }\href {https://doi.org/10.1016/j.physleta.2017.05.018} {\bibfield  {journal} {\bibinfo  {journal} {Physics Letters A}\ }\textbf {\bibinfo {volume} {381}},\ \bibinfo {pages} {2239} (\bibinfo {year} {2017})}\BibitemShut {NoStop}%
\bibitem [{\citenamefont {Zhang}\ \emph {et~al.}(2019{\natexlab{a}})\citenamefont {Zhang}, \citenamefont {Hsu},\ and\ \citenamefont {Miller}}]{zhang2019m}%
  \BibitemOpen
  \bibfield  {author} {\bibinfo {author} {\bibfnamefont {H.}~\bibnamefont {Zhang}}, \bibinfo {author} {\bibfnamefont {C.~W.}\ \bibnamefont {Hsu}},\ and\ \bibinfo {author} {\bibfnamefont {O.~D.}\ \bibnamefont {Miller}},\ }\bibfield  {title} {\bibinfo {title} {Scattering concentration bounds: Brightness theorems for waves},\ }\href {https://doi.org/10.1364/OPTICA.6.001321} {\bibfield  {journal} {\bibinfo  {journal} {Optica}\ }\textbf {\bibinfo {volume} {6}},\ \bibinfo {pages} {1321} (\bibinfo {year} {2019}{\natexlab{a}})}\BibitemShut {NoStop}%
\bibitem [{\citenamefont {Wolf}\ and\ \citenamefont {Maret}(1985)}]{wolf1985}%
  \BibitemOpen
  \bibfield  {author} {\bibinfo {author} {\bibfnamefont {P.-E.}\ \bibnamefont {Wolf}}\ and\ \bibinfo {author} {\bibfnamefont {G.}~\bibnamefont {Maret}},\ }\bibfield  {title} {\bibinfo {title} {Weak {{Localization}} and {{Coherent Backscattering}} of {{Photons}} in {{Disordered Media}}},\ }\href {https://doi.org/10.1103/physrevlett.55.2696} {\bibfield  {journal} {\bibinfo  {journal} {Physical Review Letters}\ }\textbf {\bibinfo {volume} {55}},\ \bibinfo {pages} {2696} (\bibinfo {year} {1985})}\BibitemShut {NoStop}%
\bibitem [{\citenamefont {Yamazoe}(2012)}]{yamazoe2012}%
  \BibitemOpen
  \bibfield  {author} {\bibinfo {author} {\bibfnamefont {K.}~\bibnamefont {Yamazoe}},\ }\bibfield  {title} {\bibinfo {title} {Coherency matrix formulation for partially coherent imaging to evaluate the degree of coherence for image},\ }\href {https://doi.org/10.1364/JOSAA.29.001529} {\bibfield  {journal} {\bibinfo  {journal} {JOSA A}\ }\textbf {\bibinfo {volume} {29}},\ \bibinfo {pages} {1529} (\bibinfo {year} {2012})}\BibitemShut {NoStop}%
\bibitem [{\citenamefont {Okoro}\ \emph {et~al.}(2017)\citenamefont {Okoro}, \citenamefont {Kondakci}, \citenamefont {Abouraddy},\ and\ \citenamefont {Toussaint}}]{okoro2017}%
  \BibitemOpen
  \bibfield  {author} {\bibinfo {author} {\bibfnamefont {C.}~\bibnamefont {Okoro}}, \bibinfo {author} {\bibfnamefont {H.~E.}\ \bibnamefont {Kondakci}}, \bibinfo {author} {\bibfnamefont {A.~F.}\ \bibnamefont {Abouraddy}},\ and\ \bibinfo {author} {\bibfnamefont {K.~C.}\ \bibnamefont {Toussaint}},\ }\bibfield  {title} {\bibinfo {title} {Demonstration of an optical-coherence converter},\ }\href {https://doi.org/10.1364/OPTICA.4.001052} {\bibfield  {journal} {\bibinfo  {journal} {Optica}\ }\textbf {\bibinfo {volume} {4}},\ \bibinfo {pages} {1052} (\bibinfo {year} {2017})}\BibitemShut {NoStop}%
\bibitem [{\citenamefont {Davey}\ and\ \citenamefont {Priestley}(2002)}]{davey2002}%
  \BibitemOpen
  \bibfield  {author} {\bibinfo {author} {\bibfnamefont {B.~A.}\ \bibnamefont {Davey}}\ and\ \bibinfo {author} {\bibfnamefont {H.~A.}\ \bibnamefont {Priestley}},\ }\href {https://doi.org/10.1017/CBO9780511809088} {\emph {\bibinfo {title} {Introduction to {{Lattices}} and {{Order}}}}},\ \bibinfo {edition} {2nd}\ ed.\ (\bibinfo  {publisher} {{Cambridge University Press}},\ \bibinfo {year} {2002})\BibitemShut {NoStop}%
\bibitem [{\citenamefont {Cicalese}\ and\ \citenamefont {Vaccaro}(2002)}]{cicalese2002}%
  \BibitemOpen
  \bibfield  {author} {\bibinfo {author} {\bibfnamefont {F.}~\bibnamefont {Cicalese}}\ and\ \bibinfo {author} {\bibfnamefont {U.}~\bibnamefont {Vaccaro}},\ }\bibfield  {title} {\bibinfo {title} {Supermodularity and subadditivity properties of the entropy on the majorization lattice},\ }\href {https://doi.org/10.1109/18.992785} {\bibfield  {journal} {\bibinfo  {journal} {IEEE Transactions on Information Theory}\ }\textbf {\bibinfo {volume} {48}},\ \bibinfo {pages} {933} (\bibinfo {year} {2002})}\BibitemShut {NoStop}%
\bibitem [{\citenamefont {Haus}(1984)}]{haus1984}%
  \BibitemOpen
  \bibfield  {author} {\bibinfo {author} {\bibfnamefont {H.~A.}\ \bibnamefont {Haus}},\ }\href@noop {} {\emph {\bibinfo {title} {Waves and Fields in Optoelectronics}}}\ (\bibinfo  {publisher} {{Prentice-Hall}},\ \bibinfo {address} {{Englewood Cliffs, NJ}},\ \bibinfo {year} {1984})\BibitemShut {NoStop}%
\bibitem [{\citenamefont {Vellekoop}\ and\ \citenamefont {Mosk}(2007)}]{vellekoop2007}%
  \BibitemOpen
  \bibfield  {author} {\bibinfo {author} {\bibfnamefont {I.~M.}\ \bibnamefont {Vellekoop}}\ and\ \bibinfo {author} {\bibfnamefont {A.~P.}\ \bibnamefont {Mosk}},\ }\bibfield  {title} {\bibinfo {title} {Focusing coherent light through opaque strongly scattering media},\ }\href {https://doi.org/10.1364/OL.32.002309} {\bibfield  {journal} {\bibinfo  {journal} {Optics Letters}\ }\textbf {\bibinfo {volume} {32}},\ \bibinfo {pages} {2309} (\bibinfo {year} {2007})}\BibitemShut {NoStop}%
\bibitem [{\citenamefont {Popoff}\ \emph {et~al.}(2014)\citenamefont {Popoff}, \citenamefont {Goetschy}, \citenamefont {Liew}, \citenamefont {Stone},\ and\ \citenamefont {Cao}}]{popoff2014}%
  \BibitemOpen
  \bibfield  {author} {\bibinfo {author} {\bibfnamefont {S.~M.}\ \bibnamefont {Popoff}}, \bibinfo {author} {\bibfnamefont {A.}~\bibnamefont {Goetschy}}, \bibinfo {author} {\bibfnamefont {S.~F.}\ \bibnamefont {Liew}}, \bibinfo {author} {\bibfnamefont {A.~D.}\ \bibnamefont {Stone}},\ and\ \bibinfo {author} {\bibfnamefont {H.}~\bibnamefont {Cao}},\ }\bibfield  {title} {\bibinfo {title} {Coherent {{Control}} of {{Total Transmission}} of {{Light}} through {{Disordered Media}}},\ }\href {https://doi.org/10.1103/physrevlett.112.133903} {\bibfield  {journal} {\bibinfo  {journal} {Physical Review Letters}\ }\textbf {\bibinfo {volume} {112}},\ \bibinfo {pages} {133903} (\bibinfo {year} {2014})}\BibitemShut {NoStop}%
\bibitem [{\citenamefont {Yu}\ \emph {et~al.}(2017)\citenamefont {Yu}, \citenamefont {Lee},\ and\ \citenamefont {Park}}]{yu2017e}%
  \BibitemOpen
  \bibfield  {author} {\bibinfo {author} {\bibfnamefont {H.}~\bibnamefont {Yu}}, \bibinfo {author} {\bibfnamefont {K.}~\bibnamefont {Lee}},\ and\ \bibinfo {author} {\bibfnamefont {Y.}~\bibnamefont {Park}},\ }\bibfield  {title} {\bibinfo {title} {Ultrahigh enhancement of light focusing through disordered media controlled by mega-pixel modes},\ }\href {https://doi.org/10.1364/oe.25.008036} {\bibfield  {journal} {\bibinfo  {journal} {Optics Express}\ }\textbf {\bibinfo {volume} {25}},\ \bibinfo {pages} {8036} (\bibinfo {year} {2017})}\BibitemShut {NoStop}%
\bibitem [{\citenamefont {Reck}\ \emph {et~al.}(1994)\citenamefont {Reck}, \citenamefont {Zeilinger}, \citenamefont {Bernstein},\ and\ \citenamefont {Bertani}}]{reck1994}%
  \BibitemOpen
  \bibfield  {author} {\bibinfo {author} {\bibfnamefont {M.}~\bibnamefont {Reck}}, \bibinfo {author} {\bibfnamefont {A.}~\bibnamefont {Zeilinger}}, \bibinfo {author} {\bibfnamefont {H.~J.}\ \bibnamefont {Bernstein}},\ and\ \bibinfo {author} {\bibfnamefont {P.}~\bibnamefont {Bertani}},\ }\bibfield  {title} {\bibinfo {title} {Experimental realization of any discrete unitary operator},\ }\href {https://doi.org/10.1103/PhysRevLett.73.58} {\bibfield  {journal} {\bibinfo  {journal} {Physical Review Letters}\ }\textbf {\bibinfo {volume} {73}},\ \bibinfo {pages} {58} (\bibinfo {year} {1994})}\BibitemShut {NoStop}%
\bibitem [{\citenamefont {Miller}(2013{\natexlab{a}})}]{miller2013c}%
  \BibitemOpen
  \bibfield  {author} {\bibinfo {author} {\bibfnamefont {D.~A.~B.}\ \bibnamefont {Miller}},\ }\bibfield  {title} {\bibinfo {title} {Establishing {{Optimal Wave Communication Channels Automatically}}},\ }\href {https://doi.org/10.1109/JLT.2013.2278809} {\bibfield  {journal} {\bibinfo  {journal} {Journal of Lightwave Technology}\ }\textbf {\bibinfo {volume} {31}},\ \bibinfo {pages} {3987} (\bibinfo {year} {2013}{\natexlab{a}})}\BibitemShut {NoStop}%
\bibitem [{\citenamefont {Miller}(2013{\natexlab{b}})}]{miller2013a}%
  \BibitemOpen
  \bibfield  {author} {\bibinfo {author} {\bibfnamefont {D.~A.~B.}\ \bibnamefont {Miller}},\ }\bibfield  {title} {\bibinfo {title} {Self-aligning universal beam coupler},\ }\href {https://doi.org/10.1364/OE.21.006360} {\bibfield  {journal} {\bibinfo  {journal} {Optics Express}\ }\textbf {\bibinfo {volume} {21}},\ \bibinfo {pages} {6360} (\bibinfo {year} {2013}{\natexlab{b}})}\BibitemShut {NoStop}%
\bibitem [{\citenamefont {Miller}(2013{\natexlab{c}})}]{miller2013b}%
  \BibitemOpen
  \bibfield  {author} {\bibinfo {author} {\bibfnamefont {D.~A.~B.}\ \bibnamefont {Miller}},\ }\bibfield  {title} {\bibinfo {title} {Self-configuring universal linear optical component [{{Invited}}]},\ }\href {https://doi.org/10.1364/PRJ.1.000001} {\bibfield  {journal} {\bibinfo  {journal} {Photonics Research}\ }\textbf {\bibinfo {volume} {1}},\ \bibinfo {pages} {1} (\bibinfo {year} {2013}{\natexlab{c}})},\ \Eprint {https://arxiv.org/abs/1303.4602} {arxiv:1303.4602} \BibitemShut {NoStop}%
\bibitem [{\citenamefont {Carolan}\ \emph {et~al.}(2015)\citenamefont {Carolan}, \citenamefont {Harrold}, \citenamefont {Sparrow}, \citenamefont {{Mart{\'i}n-L{\'o}pez}}, \citenamefont {Russell}, \citenamefont {Silverstone}, \citenamefont {Shadbolt}, \citenamefont {Matsuda}, \citenamefont {Oguma}, \citenamefont {Itoh}, \citenamefont {Marshall}, \citenamefont {Thompson}, \citenamefont {Matthews}, \citenamefont {Hashimoto}, \citenamefont {O'Brien},\ and\ \citenamefont {Laing}}]{carolan2015}%
  \BibitemOpen
  \bibfield  {author} {\bibinfo {author} {\bibfnamefont {J.}~\bibnamefont {Carolan}}, \bibinfo {author} {\bibfnamefont {C.}~\bibnamefont {Harrold}}, \bibinfo {author} {\bibfnamefont {C.}~\bibnamefont {Sparrow}}, \bibinfo {author} {\bibfnamefont {E.}~\bibnamefont {{Mart{\'i}n-L{\'o}pez}}}, \bibinfo {author} {\bibfnamefont {N.~J.}\ \bibnamefont {Russell}}, \bibinfo {author} {\bibfnamefont {J.~W.}\ \bibnamefont {Silverstone}}, \bibinfo {author} {\bibfnamefont {P.~J.}\ \bibnamefont {Shadbolt}}, \bibinfo {author} {\bibfnamefont {N.}~\bibnamefont {Matsuda}}, \bibinfo {author} {\bibfnamefont {M.}~\bibnamefont {Oguma}}, \bibinfo {author} {\bibfnamefont {M.}~\bibnamefont {Itoh}}, \bibinfo {author} {\bibfnamefont {G.~D.}\ \bibnamefont {Marshall}}, \bibinfo {author} {\bibfnamefont {M.~G.}\ \bibnamefont {Thompson}}, \bibinfo {author} {\bibfnamefont {J.~C.~F.}\ \bibnamefont {Matthews}}, \bibinfo {author} {\bibfnamefont {T.}~\bibnamefont {Hashimoto}}, \bibinfo {author} {\bibfnamefont {J.~L.}\ \bibnamefont {O'Brien}},\ and\
  \bibinfo {author} {\bibfnamefont {A.}~\bibnamefont {Laing}},\ }\bibfield  {title} {\bibinfo {title} {Universal linear optics},\ }\href {https://doi.org/10.1126/science.aab3642} {\bibfield  {journal} {\bibinfo  {journal} {Science}\ }\textbf {\bibinfo {volume} {349}},\ \bibinfo {pages} {711} (\bibinfo {year} {2015})}\BibitemShut {NoStop}%
\bibitem [{\citenamefont {Miller}(2015)}]{miller2015}%
  \BibitemOpen
  \bibfield  {author} {\bibinfo {author} {\bibfnamefont {D.~A.~B.}\ \bibnamefont {Miller}},\ }\bibfield  {title} {\bibinfo {title} {Perfect optics with imperfect components},\ }\href {https://doi.org/10.1364/OPTICA.2.000747} {\bibfield  {journal} {\bibinfo  {journal} {Optica}\ }\textbf {\bibinfo {volume} {2}},\ \bibinfo {pages} {747} (\bibinfo {year} {2015})}\BibitemShut {NoStop}%
\bibitem [{\citenamefont {Clements}\ \emph {et~al.}(2016)\citenamefont {Clements}, \citenamefont {Humphreys}, \citenamefont {Metcalf}, \citenamefont {Kolthammer},\ and\ \citenamefont {Walmsley}}]{clements2016}%
  \BibitemOpen
  \bibfield  {author} {\bibinfo {author} {\bibfnamefont {W.~R.}\ \bibnamefont {Clements}}, \bibinfo {author} {\bibfnamefont {P.~C.}\ \bibnamefont {Humphreys}}, \bibinfo {author} {\bibfnamefont {B.~J.}\ \bibnamefont {Metcalf}}, \bibinfo {author} {\bibfnamefont {W.~S.}\ \bibnamefont {Kolthammer}},\ and\ \bibinfo {author} {\bibfnamefont {I.~A.}\ \bibnamefont {Walmsley}},\ }\bibfield  {title} {\bibinfo {title} {Optimal design for universal multiport interferometers},\ }\href {https://doi.org/10.1364/OPTICA.3.001460} {\bibfield  {journal} {\bibinfo  {journal} {Optica}\ }\textbf {\bibinfo {volume} {3}},\ \bibinfo {pages} {1460} (\bibinfo {year} {2016})}\BibitemShut {NoStop}%
\bibitem [{\citenamefont {Ribeiro}\ \emph {et~al.}(2016)\citenamefont {Ribeiro}, \citenamefont {Ruocco}, \citenamefont {Vanacker},\ and\ \citenamefont {Bogaerts}}]{ribeiro2016}%
  \BibitemOpen
  \bibfield  {author} {\bibinfo {author} {\bibfnamefont {A.}~\bibnamefont {Ribeiro}}, \bibinfo {author} {\bibfnamefont {A.}~\bibnamefont {Ruocco}}, \bibinfo {author} {\bibfnamefont {L.}~\bibnamefont {Vanacker}},\ and\ \bibinfo {author} {\bibfnamefont {W.}~\bibnamefont {Bogaerts}},\ }\bibfield  {title} {\bibinfo {title} {Demonstration of a 4\,{\texttimes}\,4-port universal linear circuit},\ }\href {https://doi.org/10.1364/OPTICA.3.001348} {\bibfield  {journal} {\bibinfo  {journal} {Optica}\ }\textbf {\bibinfo {volume} {3}},\ \bibinfo {pages} {1348} (\bibinfo {year} {2016})}\BibitemShut {NoStop}%
\bibitem [{\citenamefont {Wilkes}\ \emph {et~al.}(2016)\citenamefont {Wilkes}, \citenamefont {Qiang}, \citenamefont {Wang}, \citenamefont {Santagati}, \citenamefont {Paesani}, \citenamefont {Zhou}, \citenamefont {Miller}, \citenamefont {Marshall}, \citenamefont {Thompson},\ and\ \citenamefont {O'Brien}}]{wilkes2016}%
  \BibitemOpen
  \bibfield  {author} {\bibinfo {author} {\bibfnamefont {C.~M.}\ \bibnamefont {Wilkes}}, \bibinfo {author} {\bibfnamefont {X.}~\bibnamefont {Qiang}}, \bibinfo {author} {\bibfnamefont {J.}~\bibnamefont {Wang}}, \bibinfo {author} {\bibfnamefont {R.}~\bibnamefont {Santagati}}, \bibinfo {author} {\bibfnamefont {S.}~\bibnamefont {Paesani}}, \bibinfo {author} {\bibfnamefont {X.}~\bibnamefont {Zhou}}, \bibinfo {author} {\bibfnamefont {D.~a.~B.}\ \bibnamefont {Miller}}, \bibinfo {author} {\bibfnamefont {G.~D.}\ \bibnamefont {Marshall}}, \bibinfo {author} {\bibfnamefont {M.~G.}\ \bibnamefont {Thompson}},\ and\ \bibinfo {author} {\bibfnamefont {J.~L.}\ \bibnamefont {O'Brien}},\ }\bibfield  {title} {\bibinfo {title} {60 {{dB}} high-extinction auto-configured {{Mach}}{\textendash}{{Zehnder}} interferometer},\ }\href {https://doi.org/10.1364/OL.41.005318} {\bibfield  {journal} {\bibinfo  {journal} {Optics Letters}\ }\textbf {\bibinfo {volume} {41}},\ \bibinfo {pages} {5318} (\bibinfo {year} {2016})}\BibitemShut {NoStop}%
\bibitem [{\citenamefont {Annoni}\ \emph {et~al.}(2017)\citenamefont {Annoni}, \citenamefont {Guglielmi}, \citenamefont {Carminati}, \citenamefont {Ferrari}, \citenamefont {Sampietro}, \citenamefont {Miller}, \citenamefont {Melloni},\ and\ \citenamefont {Morichetti}}]{annoni2017}%
  \BibitemOpen
  \bibfield  {author} {\bibinfo {author} {\bibfnamefont {A.}~\bibnamefont {Annoni}}, \bibinfo {author} {\bibfnamefont {E.}~\bibnamefont {Guglielmi}}, \bibinfo {author} {\bibfnamefont {M.}~\bibnamefont {Carminati}}, \bibinfo {author} {\bibfnamefont {G.}~\bibnamefont {Ferrari}}, \bibinfo {author} {\bibfnamefont {M.}~\bibnamefont {Sampietro}}, \bibinfo {author} {\bibfnamefont {D.~A.}\ \bibnamefont {Miller}}, \bibinfo {author} {\bibfnamefont {A.}~\bibnamefont {Melloni}},\ and\ \bibinfo {author} {\bibfnamefont {F.}~\bibnamefont {Morichetti}},\ }\bibfield  {title} {\bibinfo {title} {Unscrambling light{\textemdash}automatically undoing strong mixing between modes},\ }\href {https://doi.org/10.1038/lsa.2017.110} {\bibfield  {journal} {\bibinfo  {journal} {Light: Science \& Applications}\ }\textbf {\bibinfo {volume} {6}},\ \bibinfo {pages} {e17110} (\bibinfo {year} {2017})}\BibitemShut {NoStop}%
\bibitem [{\citenamefont {Miller}(2017)}]{miller2017d}%
  \BibitemOpen
  \bibfield  {author} {\bibinfo {author} {\bibfnamefont {D.~A.~B.}\ \bibnamefont {Miller}},\ }\bibfield  {title} {\bibinfo {title} {Setting up meshes of interferometers {\textendash} reversed local light interference method},\ }\href {https://doi.org/10.1364/OE.25.029233} {\bibfield  {journal} {\bibinfo  {journal} {Optics Express}\ }\textbf {\bibinfo {volume} {25}},\ \bibinfo {pages} {29233} (\bibinfo {year} {2017})}\BibitemShut {NoStop}%
\bibitem [{\citenamefont {Perez}\ \emph {et~al.}(2017)\citenamefont {Perez}, \citenamefont {Gasulla}, \citenamefont {Fraile}, \citenamefont {Crudgington}, \citenamefont {Thomson}, \citenamefont {Khokhar}, \citenamefont {Li}, \citenamefont {Cao}, \citenamefont {Mashanovich},\ and\ \citenamefont {Capmany}}]{perez2017}%
  \BibitemOpen
  \bibfield  {author} {\bibinfo {author} {\bibfnamefont {D.}~\bibnamefont {Perez}}, \bibinfo {author} {\bibfnamefont {I.}~\bibnamefont {Gasulla}}, \bibinfo {author} {\bibfnamefont {F.~J.}\ \bibnamefont {Fraile}}, \bibinfo {author} {\bibfnamefont {L.}~\bibnamefont {Crudgington}}, \bibinfo {author} {\bibfnamefont {D.~J.}\ \bibnamefont {Thomson}}, \bibinfo {author} {\bibfnamefont {A.~Z.}\ \bibnamefont {Khokhar}}, \bibinfo {author} {\bibfnamefont {K.}~\bibnamefont {Li}}, \bibinfo {author} {\bibfnamefont {W.}~\bibnamefont {Cao}}, \bibinfo {author} {\bibfnamefont {G.~Z.}\ \bibnamefont {Mashanovich}},\ and\ \bibinfo {author} {\bibfnamefont {J.}~\bibnamefont {Capmany}},\ }\bibfield  {title} {\bibinfo {title} {Silicon {{Photonics Rectangular Universal Interferometer}}},\ }\href {https://doi.org/10.1002/lpor.201700219} {\bibfield  {journal} {\bibinfo  {journal} {Laser \& Photonics Reviews}\ }\textbf {\bibinfo {volume} {11}},\ \bibinfo {pages} {1700219} (\bibinfo {year} {2017})}\BibitemShut {NoStop}%
\bibitem [{\citenamefont {Harris}\ \emph {et~al.}(2018)\citenamefont {Harris}, \citenamefont {Carolan}, \citenamefont {Bunandar}, \citenamefont {Prabhu}, \citenamefont {Hochberg}, \citenamefont {{Baehr-Jones}}, \citenamefont {Fanto}, \citenamefont {Smith}, \citenamefont {Tison}, \citenamefont {Alsing},\ and\ \citenamefont {Englund}}]{harris2018}%
  \BibitemOpen
  \bibfield  {author} {\bibinfo {author} {\bibfnamefont {N.~C.}\ \bibnamefont {Harris}}, \bibinfo {author} {\bibfnamefont {J.}~\bibnamefont {Carolan}}, \bibinfo {author} {\bibfnamefont {D.}~\bibnamefont {Bunandar}}, \bibinfo {author} {\bibfnamefont {M.}~\bibnamefont {Prabhu}}, \bibinfo {author} {\bibfnamefont {M.}~\bibnamefont {Hochberg}}, \bibinfo {author} {\bibfnamefont {T.}~\bibnamefont {{Baehr-Jones}}}, \bibinfo {author} {\bibfnamefont {M.~L.}\ \bibnamefont {Fanto}}, \bibinfo {author} {\bibfnamefont {A.~M.}\ \bibnamefont {Smith}}, \bibinfo {author} {\bibfnamefont {C.~C.}\ \bibnamefont {Tison}}, \bibinfo {author} {\bibfnamefont {P.~M.}\ \bibnamefont {Alsing}},\ and\ \bibinfo {author} {\bibfnamefont {D.}~\bibnamefont {Englund}},\ }\bibfield  {title} {\bibinfo {title} {Linear programmable nanophotonic processors},\ }\href {https://doi.org/10.1364/OPTICA.5.001623} {\bibfield  {journal} {\bibinfo  {journal} {Optica}\ }\textbf {\bibinfo {volume} {5}},\ \bibinfo {pages} {1623} (\bibinfo {year}
  {2018})}\BibitemShut {NoStop}%
\bibitem [{\citenamefont {Pai}\ \emph {et~al.}(2019)\citenamefont {Pai}, \citenamefont {Bartlett}, \citenamefont {Solgaard},\ and\ \citenamefont {Miller}}]{pai2019}%
  \BibitemOpen
  \bibfield  {author} {\bibinfo {author} {\bibfnamefont {S.}~\bibnamefont {Pai}}, \bibinfo {author} {\bibfnamefont {B.}~\bibnamefont {Bartlett}}, \bibinfo {author} {\bibfnamefont {O.}~\bibnamefont {Solgaard}},\ and\ \bibinfo {author} {\bibfnamefont {D.~A.~B.}\ \bibnamefont {Miller}},\ }\bibfield  {title} {\bibinfo {title} {Matrix {{Optimization}} on {{Universal Unitary Photonic Devices}}},\ }\href {https://doi.org/10.1103/physrevapplied.11.064044} {\bibfield  {journal} {\bibinfo  {journal} {Physical Review Applied}\ }\textbf {\bibinfo {volume} {11}},\ \bibinfo {pages} {064044} (\bibinfo {year} {2019})}\BibitemShut {NoStop}%
\bibitem [{\citenamefont {Morizur}\ \emph {et~al.}(2010)\citenamefont {Morizur}, \citenamefont {Nicholls}, \citenamefont {Jian}, \citenamefont {Armstrong}, \citenamefont {Treps}, \citenamefont {Hage}, \citenamefont {Hsu}, \citenamefont {Bowen}, \citenamefont {Janousek},\ and\ \citenamefont {Bachor}}]{morizur2010}%
  \BibitemOpen
  \bibfield  {author} {\bibinfo {author} {\bibfnamefont {J.-F.}\ \bibnamefont {Morizur}}, \bibinfo {author} {\bibfnamefont {L.}~\bibnamefont {Nicholls}}, \bibinfo {author} {\bibfnamefont {P.}~\bibnamefont {Jian}}, \bibinfo {author} {\bibfnamefont {S.}~\bibnamefont {Armstrong}}, \bibinfo {author} {\bibfnamefont {N.}~\bibnamefont {Treps}}, \bibinfo {author} {\bibfnamefont {B.}~\bibnamefont {Hage}}, \bibinfo {author} {\bibfnamefont {M.}~\bibnamefont {Hsu}}, \bibinfo {author} {\bibfnamefont {W.}~\bibnamefont {Bowen}}, \bibinfo {author} {\bibfnamefont {J.}~\bibnamefont {Janousek}},\ and\ \bibinfo {author} {\bibfnamefont {H.-A.}\ \bibnamefont {Bachor}},\ }\bibfield  {title} {\bibinfo {title} {Programmable unitary spatial mode manipulation},\ }\href {https://doi.org/10.1364/JOSAA.27.002524} {\bibfield  {journal} {\bibinfo  {journal} {JOSA A}\ }\textbf {\bibinfo {volume} {27}},\ \bibinfo {pages} {2524} (\bibinfo {year} {2010})}\BibitemShut {NoStop}%
\bibitem [{\citenamefont {Labroille}\ \emph {et~al.}(2014)\citenamefont {Labroille}, \citenamefont {Denolle}, \citenamefont {Jian}, \citenamefont {Genevaux}, \citenamefont {Treps},\ and\ \citenamefont {Morizur}}]{labroille2014}%
  \BibitemOpen
  \bibfield  {author} {\bibinfo {author} {\bibfnamefont {G.}~\bibnamefont {Labroille}}, \bibinfo {author} {\bibfnamefont {B.}~\bibnamefont {Denolle}}, \bibinfo {author} {\bibfnamefont {P.}~\bibnamefont {Jian}}, \bibinfo {author} {\bibfnamefont {P.}~\bibnamefont {Genevaux}}, \bibinfo {author} {\bibfnamefont {N.}~\bibnamefont {Treps}},\ and\ \bibinfo {author} {\bibfnamefont {J.-F.}\ \bibnamefont {Morizur}},\ }\bibfield  {title} {\bibinfo {title} {Efficient and mode selective spatial mode multiplexer based on multi-plane light conversion},\ }\href {https://doi.org/10.1364/OE.22.015599} {\bibfield  {journal} {\bibinfo  {journal} {Optics Express}\ }\textbf {\bibinfo {volume} {22}},\ \bibinfo {pages} {15599} (\bibinfo {year} {2014})}\BibitemShut {NoStop}%
\bibitem [{\citenamefont {Tanomura}\ \emph {et~al.}(2022)\citenamefont {Tanomura}, \citenamefont {Taguchi}, \citenamefont {Tang}, \citenamefont {Tanemura},\ and\ \citenamefont {Nakano}}]{tanomura2022}%
  \BibitemOpen
  \bibfield  {author} {\bibinfo {author} {\bibfnamefont {R.}~\bibnamefont {Tanomura}}, \bibinfo {author} {\bibfnamefont {Y.}~\bibnamefont {Taguchi}}, \bibinfo {author} {\bibfnamefont {R.}~\bibnamefont {Tang}}, \bibinfo {author} {\bibfnamefont {T.}~\bibnamefont {Tanemura}},\ and\ \bibinfo {author} {\bibfnamefont {Y.}~\bibnamefont {Nakano}},\ }\bibfield  {title} {\bibinfo {title} {Entropy of {{Mode Mixers}} for {{Optical Unitary Converter}} based on {{Multi-Plane Light Conversion}}},\ }in\ \href {https://doi.org/10.1364/CLEOPR.2022.CWP13A_02} {\emph {\bibinfo {booktitle} {Proceedings of the 2022 {{Conference}} on {{Lasers}} and {{Electro-Optics Pacific Rim}} (2022), Paper {{CWP13A}}\_02}}}\ (\bibinfo  {publisher} {{Optica Publishing Group}},\ \bibinfo {year} {2022})\ p.\ \bibinfo {pages} {CWP13A\_02}\BibitemShut {NoStop}%
\bibitem [{\citenamefont {Kupianskyi}\ \emph {et~al.}(2023)\citenamefont {Kupianskyi}, \citenamefont {Horsley},\ and\ \citenamefont {Phillips}}]{kupianskyi2023}%
  \BibitemOpen
  \bibfield  {author} {\bibinfo {author} {\bibfnamefont {H.}~\bibnamefont {Kupianskyi}}, \bibinfo {author} {\bibfnamefont {S.~A.~R.}\ \bibnamefont {Horsley}},\ and\ \bibinfo {author} {\bibfnamefont {D.~B.}\ \bibnamefont {Phillips}},\ }\bibfield  {title} {\bibinfo {title} {High-dimensional spatial mode sorting and optical circuit design using multi-plane light conversion},\ }\href {https://doi.org/10.1063/5.0128431} {\bibfield  {journal} {\bibinfo  {journal} {APL Photonics}\ }\textbf {\bibinfo {volume} {8}},\ \bibinfo {pages} {026101} (\bibinfo {year} {2023})}\BibitemShut {NoStop}%
\bibitem [{\citenamefont {Taguchi}\ \emph {et~al.}(2023)\citenamefont {Taguchi}, \citenamefont {Wang}, \citenamefont {Tanomura}, \citenamefont {Tanemura},\ and\ \citenamefont {Ozeki}}]{taguchi2023}%
  \BibitemOpen
  \bibfield  {author} {\bibinfo {author} {\bibfnamefont {Y.}~\bibnamefont {Taguchi}}, \bibinfo {author} {\bibfnamefont {Y.}~\bibnamefont {Wang}}, \bibinfo {author} {\bibfnamefont {R.}~\bibnamefont {Tanomura}}, \bibinfo {author} {\bibfnamefont {T.}~\bibnamefont {Tanemura}},\ and\ \bibinfo {author} {\bibfnamefont {Y.}~\bibnamefont {Ozeki}},\ }\bibfield  {title} {\bibinfo {title} {Iterative {{Configuration}} of {{Programmable Unitary Converter Based}} on {{Few-Layer Redundant Multiplane Light Conversion}}},\ }\href {https://doi.org/10.1103/PhysRevApplied.19.054002} {\bibfield  {journal} {\bibinfo  {journal} {Physical Review Applied}\ }\textbf {\bibinfo {volume} {19}},\ \bibinfo {pages} {054002} (\bibinfo {year} {2023})}\BibitemShut {NoStop}%
\bibitem [{\citenamefont {Zhang}\ and\ \citenamefont {Fontaine}(2023)}]{zhang2023b}%
  \BibitemOpen
  \bibfield  {author} {\bibinfo {author} {\bibfnamefont {Y.}~\bibnamefont {Zhang}}\ and\ \bibinfo {author} {\bibfnamefont {N.~K.}\ \bibnamefont {Fontaine}},\ }\href@noop {} {\bibinfo {title} {Multi-{{Plane Light Conversion}}: {{A Practical Tutorial}}}} (\bibinfo {year} {2023})\BibitemShut {NoStop}%
\bibitem [{\citenamefont {Guo}\ and\ \citenamefont {Fan}(2023)}]{guo2023b}%
  \BibitemOpen
  \bibfield  {author} {\bibinfo {author} {\bibfnamefont {C.}~\bibnamefont {Guo}}\ and\ \bibinfo {author} {\bibfnamefont {S.}~\bibnamefont {Fan}},\ }\bibfield  {title} {\bibinfo {title} {Majorization {{Theory}} for {{Unitary Control}} of {{Optical Absorption}} and {{Emission}}},\ }\href {https://doi.org/10.1103/PhysRevLett.130.146202} {\bibfield  {journal} {\bibinfo  {journal} {Physical Review Letters}\ }\textbf {\bibinfo {volume} {130}},\ \bibinfo {pages} {146202} (\bibinfo {year} {2023})}\BibitemShut {NoStop}%
\bibitem [{\citenamefont {Guo}\ \emph {et~al.}(2024)\citenamefont {Guo}, \citenamefont {Miller},\ and\ \citenamefont {Fan}}]{guo2024}%
  \BibitemOpen
  \bibfield  {author} {\bibinfo {author} {\bibfnamefont {C.}~\bibnamefont {Guo}}, \bibinfo {author} {\bibfnamefont {D.~A.~B.}\ \bibnamefont {Miller}},\ and\ \bibinfo {author} {\bibfnamefont {S.}~\bibnamefont {Fan}},\ }\bibfield  {title} {\bibinfo {title} {Unitary {{Control}} of {{Multiport Wave Transmission}}},\ }\href@noop {} {\bibfield  {journal} {\bibinfo  {journal} {(Submitted)}\ } (\bibinfo {year} {2024})}\BibitemShut {NoStop}%
\bibitem [{\citenamefont {Guo}\ and\ \citenamefont {Fan}(2024{\natexlab{a}})}]{guo2024c}%
  \BibitemOpen
  \bibfield  {author} {\bibinfo {author} {\bibfnamefont {C.}~\bibnamefont {Guo}}\ and\ \bibinfo {author} {\bibfnamefont {S.}~\bibnamefont {Fan}},\ }\bibfield  {title} {\bibinfo {title} {Topological winding guaranteed coherent orthogonal scattering},\ }\href {https://doi.org/10.1103/PhysRevA.109.L061503} {\bibfield  {journal} {\bibinfo  {journal} {Physical Review A}\ }\textbf {\bibinfo {volume} {109}},\ \bibinfo {pages} {L061503} (\bibinfo {year} {2024}{\natexlab{a}})}\BibitemShut {NoStop}%
\bibitem [{\citenamefont {Guo}\ and\ \citenamefont {Fan}(2024{\natexlab{b}})}]{guo2024a}%
  \BibitemOpen
  \bibfield  {author} {\bibinfo {author} {\bibfnamefont {C.}~\bibnamefont {Guo}}\ and\ \bibinfo {author} {\bibfnamefont {S.}~\bibnamefont {Fan}},\ }\bibfield  {title} {\bibinfo {title} {Unitary control of partially coherent waves. {{I}}. {{Absorption}}},\ }\href {https://doi.org/10.1103/PhysRevB.110.035430} {\bibfield  {journal} {\bibinfo  {journal} {Physical Review B}\ }\textbf {\bibinfo {volume} {110}},\ \bibinfo {pages} {035430} (\bibinfo {year} {2024}{\natexlab{b}})}\BibitemShut {NoStop}%
\bibitem [{\citenamefont {Guo}\ and\ \citenamefont {Fan}(2024{\natexlab{c}})}]{guo2024b}%
  \BibitemOpen
  \bibfield  {author} {\bibinfo {author} {\bibfnamefont {C.}~\bibnamefont {Guo}}\ and\ \bibinfo {author} {\bibfnamefont {S.}~\bibnamefont {Fan}},\ }\bibfield  {title} {\bibinfo {title} {Unitary control of partially coherent waves. {{II}}. {{Transmission}} or reflection},\ }\href {https://doi.org/10.1103/PhysRevB.110.035431} {\bibfield  {journal} {\bibinfo  {journal} {Physical Review B}\ }\textbf {\bibinfo {volume} {110}},\ \bibinfo {pages} {035431} (\bibinfo {year} {2024}{\natexlab{c}})}\BibitemShut {NoStop}%
\bibitem [{\citenamefont {Zhang}\ \emph {et~al.}(2019{\natexlab{b}})\citenamefont {Zhang}, \citenamefont {Hsu},\ and\ \citenamefont {Miller}}]{zhang2019a}%
  \BibitemOpen
  \bibfield  {author} {\bibinfo {author} {\bibfnamefont {H.}~\bibnamefont {Zhang}}, \bibinfo {author} {\bibfnamefont {C.~W.}\ \bibnamefont {Hsu}},\ and\ \bibinfo {author} {\bibfnamefont {O.~D.}\ \bibnamefont {Miller}},\ }\bibfield  {title} {\bibinfo {title} {Scattering concentration bounds: Brightness theorems for waves},\ }\href {https://doi.org/10.1364/OPTICA.6.001321} {\bibfield  {journal} {\bibinfo  {journal} {Optica}\ }\textbf {\bibinfo {volume} {6}},\ \bibinfo {pages} {1321} (\bibinfo {year} {2019}{\natexlab{b}})}\BibitemShut {NoStop}%
\bibitem [{\citenamefont {Miller}\ \emph {et~al.}(2017)\citenamefont {Miller}, \citenamefont {Zhu},\ and\ \citenamefont {Fan}}]{miller2017b}%
  \BibitemOpen
  \bibfield  {author} {\bibinfo {author} {\bibfnamefont {D.~A.~B.}\ \bibnamefont {Miller}}, \bibinfo {author} {\bibfnamefont {L.}~\bibnamefont {Zhu}},\ and\ \bibinfo {author} {\bibfnamefont {S.}~\bibnamefont {Fan}},\ }\bibfield  {title} {\bibinfo {title} {Universal modal radiation laws for all thermal emitters},\ }\href {https://doi.org/10.1073/pnas.1701606114} {\bibfield  {journal} {\bibinfo  {journal} {Proceedings of the National Academy of Sciences}\ }\textbf {\bibinfo {volume} {114}},\ \bibinfo {pages} {4336} (\bibinfo {year} {2017})}\BibitemShut {NoStop}%
\bibitem [{\citenamefont {Yamilov}\ \emph {et~al.}(2016)\citenamefont {Yamilov}, \citenamefont {Petrenko}, \citenamefont {Sarma},\ and\ \citenamefont {Cao}}]{yamilov2016}%
  \BibitemOpen
  \bibfield  {author} {\bibinfo {author} {\bibfnamefont {A.}~\bibnamefont {Yamilov}}, \bibinfo {author} {\bibfnamefont {S.}~\bibnamefont {Petrenko}}, \bibinfo {author} {\bibfnamefont {R.}~\bibnamefont {Sarma}},\ and\ \bibinfo {author} {\bibfnamefont {H.}~\bibnamefont {Cao}},\ }\bibfield  {title} {\bibinfo {title} {Shape dependence of transmission, reflection, and absorption eigenvalue densities in disordered waveguides with dissipation},\ }\href {https://doi.org/10.1103/PhysRevB.93.100201} {\bibfield  {journal} {\bibinfo  {journal} {Physical Review B}\ }\textbf {\bibinfo {volume} {93}},\ \bibinfo {pages} {100201} (\bibinfo {year} {2016})}\BibitemShut {NoStop}%
\bibitem [{Note1()}]{Note1}%
  \BibitemOpen
  \bibinfo {note} {These are the only four possibilities since both $\{\alpha \}_{S,1}$ and $\{\alpha \}_{S,2}$ contain the point $\protect \frac {1}{n}\DOTSB \sum@ \slimits@ _{i=1}^{n}[1-\sigma ^{2}_{i}(S)]$}\BibitemShut {NoStop}%
\bibitem [{\citenamefont {Alberti}\ and\ \citenamefont {Uhlmann}(1982)}]{alberti1982}%
  \BibitemOpen
  \bibfield  {author} {\bibinfo {author} {\bibfnamefont {P.~M.}\ \bibnamefont {Alberti}}\ and\ \bibinfo {author} {\bibfnamefont {A.}~\bibnamefont {Uhlmann}},\ }\href@noop {} {\emph {\bibinfo {title} {Stochasticity and Partial Order: Doubly Stochastic Maps and Unitary Mixing}}},\ \bibinfo {series} {Mathematics and Its Applications}\ No.\ \bibinfo {number} {v. 9}\ (\bibinfo  {publisher} {{D. Reidel Pub. Co. ; Distributors for the U.S.A. and Canada by Kluwer Boston}},\ \bibinfo {address} {Dordrecht, Holland ; Boston : Hingham, Mass},\ \bibinfo {year} {1982})\BibitemShut {NoStop}%
\bibitem [{\citenamefont {Bapat}(1991)}]{bapat1991}%
  \BibitemOpen
  \bibfield  {author} {\bibinfo {author} {\bibfnamefont {R.~B.}\ \bibnamefont {Bapat}},\ }\bibfield  {title} {\bibinfo {title} {Majorization and singular values. {{III}}},\ }\href {https://doi.org/10.1016/0024-3795(91)90287-7} {\bibfield  {journal} {\bibinfo  {journal} {Linear Algebra and its Applications}\ }\textbf {\bibinfo {volume} {145}},\ \bibinfo {pages} {59} (\bibinfo {year} {1991})}\BibitemShut {NoStop}%
\bibitem [{\citenamefont {Guerreschi}\ and\ \citenamefont {Smelyanskiy}(2017)}]{guerreschi2017}%
  \BibitemOpen
  \bibfield  {author} {\bibinfo {author} {\bibfnamefont {G.~G.}\ \bibnamefont {Guerreschi}}\ and\ \bibinfo {author} {\bibfnamefont {M.}~\bibnamefont {Smelyanskiy}},\ }\href {https://doi.org/10.48550/arXiv.1701.01450} {\bibinfo {title} {Practical optimization for hybrid quantum-classical algorithms}} (\bibinfo {year} {2017}),\ \Eprint {https://arxiv.org/abs/1701.01450} {arXiv:1701.01450 [quant-ph]} \BibitemShut {NoStop}%
\bibitem [{\citenamefont {Mitarai}\ \emph {et~al.}(2018)\citenamefont {Mitarai}, \citenamefont {Negoro}, \citenamefont {Kitagawa},\ and\ \citenamefont {Fujii}}]{mitarai2018}%
  \BibitemOpen
  \bibfield  {author} {\bibinfo {author} {\bibfnamefont {K.}~\bibnamefont {Mitarai}}, \bibinfo {author} {\bibfnamefont {M.}~\bibnamefont {Negoro}}, \bibinfo {author} {\bibfnamefont {M.}~\bibnamefont {Kitagawa}},\ and\ \bibinfo {author} {\bibfnamefont {K.}~\bibnamefont {Fujii}},\ }\bibfield  {title} {\bibinfo {title} {Quantum circuit learning},\ }\href {https://doi.org/10.1103/PhysRevA.98.032309} {\bibfield  {journal} {\bibinfo  {journal} {Physical Review A}\ }\textbf {\bibinfo {volume} {98}},\ \bibinfo {pages} {032309} (\bibinfo {year} {2018})}\BibitemShut {NoStop}%
\bibitem [{\citenamefont {Schuld}\ \emph {et~al.}(2019)\citenamefont {Schuld}, \citenamefont {Bergholm}, \citenamefont {Gogolin}, \citenamefont {Izaac},\ and\ \citenamefont {Killoran}}]{schuld2019}%
  \BibitemOpen
  \bibfield  {author} {\bibinfo {author} {\bibfnamefont {M.}~\bibnamefont {Schuld}}, \bibinfo {author} {\bibfnamefont {V.}~\bibnamefont {Bergholm}}, \bibinfo {author} {\bibfnamefont {C.}~\bibnamefont {Gogolin}}, \bibinfo {author} {\bibfnamefont {J.}~\bibnamefont {Izaac}},\ and\ \bibinfo {author} {\bibfnamefont {N.}~\bibnamefont {Killoran}},\ }\bibfield  {title} {\bibinfo {title} {Evaluating analytic gradients on quantum hardware},\ }\href {https://doi.org/10.1103/PhysRevA.99.032331} {\bibfield  {journal} {\bibinfo  {journal} {Physical Review A}\ }\textbf {\bibinfo {volume} {99}},\ \bibinfo {pages} {032331} (\bibinfo {year} {2019})}\BibitemShut {NoStop}%
\bibitem [{\citenamefont {Mari}\ \emph {et~al.}(2021)\citenamefont {Mari}, \citenamefont {Bromley},\ and\ \citenamefont {Killoran}}]{mari2021}%
  \BibitemOpen
  \bibfield  {author} {\bibinfo {author} {\bibfnamefont {A.}~\bibnamefont {Mari}}, \bibinfo {author} {\bibfnamefont {T.~R.}\ \bibnamefont {Bromley}},\ and\ \bibinfo {author} {\bibfnamefont {N.}~\bibnamefont {Killoran}},\ }\bibfield  {title} {\bibinfo {title} {Estimating the gradient and higher-order derivatives on quantum hardware},\ }\href {https://doi.org/10.1103/PhysRevA.103.012405} {\bibfield  {journal} {\bibinfo  {journal} {Physical Review A}\ }\textbf {\bibinfo {volume} {103}},\ \bibinfo {pages} {012405} (\bibinfo {year} {2021})}\BibitemShut {NoStop}%
\bibitem [{\citenamefont {Cerezo}\ \emph {et~al.}(2021)\citenamefont {Cerezo}, \citenamefont {Arrasmith}, \citenamefont {Babbush}, \citenamefont {Benjamin}, \citenamefont {Endo}, \citenamefont {Fujii}, \citenamefont {McClean}, \citenamefont {Mitarai}, \citenamefont {Yuan}, \citenamefont {Cincio},\ and\ \citenamefont {Coles}}]{cerezo2021}%
  \BibitemOpen
  \bibfield  {author} {\bibinfo {author} {\bibfnamefont {M.}~\bibnamefont {Cerezo}}, \bibinfo {author} {\bibfnamefont {A.}~\bibnamefont {Arrasmith}}, \bibinfo {author} {\bibfnamefont {R.}~\bibnamefont {Babbush}}, \bibinfo {author} {\bibfnamefont {S.~C.}\ \bibnamefont {Benjamin}}, \bibinfo {author} {\bibfnamefont {S.}~\bibnamefont {Endo}}, \bibinfo {author} {\bibfnamefont {K.}~\bibnamefont {Fujii}}, \bibinfo {author} {\bibfnamefont {J.~R.}\ \bibnamefont {McClean}}, \bibinfo {author} {\bibfnamefont {K.}~\bibnamefont {Mitarai}}, \bibinfo {author} {\bibfnamefont {X.}~\bibnamefont {Yuan}}, \bibinfo {author} {\bibfnamefont {L.}~\bibnamefont {Cincio}},\ and\ \bibinfo {author} {\bibfnamefont {P.~J.}\ \bibnamefont {Coles}},\ }\bibfield  {title} {\bibinfo {title} {Variational quantum algorithms},\ }\href {https://doi.org/10.1038/s42254-021-00348-9} {\bibfield  {journal} {\bibinfo  {journal} {Nature Reviews Physics}\ }\textbf {\bibinfo {volume} {3}},\ \bibinfo {pages} {625} (\bibinfo {year} {2021})}\BibitemShut
  {NoStop}%
\bibitem [{\citenamefont {Bergholm}\ \emph {et~al.}(2022)\citenamefont {Bergholm}, \citenamefont {Izaac}, \citenamefont {Schuld}, \citenamefont {Gogolin}, \citenamefont {Ahmed}, \citenamefont {Ajith}, \citenamefont {Alam}, \citenamefont {{Alonso-Linaje}}, \citenamefont {AkashNarayanan}, \citenamefont {Asadi}, \citenamefont {Arrazola}, \citenamefont {Azad}, \citenamefont {Banning}, \citenamefont {Blank}, \citenamefont {Bromley}, \citenamefont {Cordier}, \citenamefont {Ceroni}, \citenamefont {Delgado}, \citenamefont {Di~Matteo}, \citenamefont {Dusko}, \citenamefont {Garg}, \citenamefont {Guala}, \citenamefont {Hayes}, \citenamefont {Hill}, \citenamefont {Ijaz}, \citenamefont {Isacsson}, \citenamefont {Ittah}, \citenamefont {Jahangiri}, \citenamefont {Jain}, \citenamefont {Jiang}, \citenamefont {Khandelwal}, \citenamefont {Kottmann}, \citenamefont {Lang}, \citenamefont {Lee}, \citenamefont {Loke}, \citenamefont {Lowe}, \citenamefont {McKiernan}, \citenamefont {Meyer}, \citenamefont {{Monta{\~n}ez-Barrera}},
  \citenamefont {Moyard}, \citenamefont {Niu}, \citenamefont {O'Riordan}, \citenamefont {Oud}, \citenamefont {Panigrahi}, \citenamefont {Park}, \citenamefont {Polatajko}, \citenamefont {Quesada}, \citenamefont {Roberts}, \citenamefont {S{\'a}}, \citenamefont {Schoch}, \citenamefont {Shi}, \citenamefont {Shu}, \citenamefont {Sim}, \citenamefont {Singh}, \citenamefont {Strandberg}, \citenamefont {Soni}, \citenamefont {Sz{\'a}va}, \citenamefont {Thabet}, \citenamefont {{Vargas-Hern{\'a}ndez}}, \citenamefont {Vincent}, \citenamefont {Vitucci}, \citenamefont {Weber}, \citenamefont {Wierichs}, \citenamefont {Wiersema}, \citenamefont {Willmann}, \citenamefont {Wong}, \citenamefont {Zhang},\ and\ \citenamefont {Killoran}}]{bergholm2022}%
  \BibitemOpen
  \bibfield  {author} {\bibinfo {author} {\bibfnamefont {V.}~\bibnamefont {Bergholm}}, \bibinfo {author} {\bibfnamefont {J.}~\bibnamefont {Izaac}}, \bibinfo {author} {\bibfnamefont {M.}~\bibnamefont {Schuld}}, \bibinfo {author} {\bibfnamefont {C.}~\bibnamefont {Gogolin}}, \bibinfo {author} {\bibfnamefont {S.}~\bibnamefont {Ahmed}}, \bibinfo {author} {\bibfnamefont {V.}~\bibnamefont {Ajith}}, \bibinfo {author} {\bibfnamefont {M.~S.}\ \bibnamefont {Alam}}, \bibinfo {author} {\bibfnamefont {G.}~\bibnamefont {{Alonso-Linaje}}}, \bibinfo {author} {\bibfnamefont {B.}~\bibnamefont {AkashNarayanan}}, \bibinfo {author} {\bibfnamefont {A.}~\bibnamefont {Asadi}}, \bibinfo {author} {\bibfnamefont {J.~M.}\ \bibnamefont {Arrazola}}, \bibinfo {author} {\bibfnamefont {U.}~\bibnamefont {Azad}}, \bibinfo {author} {\bibfnamefont {S.}~\bibnamefont {Banning}}, \bibinfo {author} {\bibfnamefont {C.}~\bibnamefont {Blank}}, \bibinfo {author} {\bibfnamefont {T.~R.}\ \bibnamefont {Bromley}}, \bibinfo {author} {\bibfnamefont {B.~A.}\
  \bibnamefont {Cordier}}, \bibinfo {author} {\bibfnamefont {J.}~\bibnamefont {Ceroni}}, \bibinfo {author} {\bibfnamefont {A.}~\bibnamefont {Delgado}}, \bibinfo {author} {\bibfnamefont {O.}~\bibnamefont {Di~Matteo}}, \bibinfo {author} {\bibfnamefont {A.}~\bibnamefont {Dusko}}, \bibinfo {author} {\bibfnamefont {T.}~\bibnamefont {Garg}}, \bibinfo {author} {\bibfnamefont {D.}~\bibnamefont {Guala}}, \bibinfo {author} {\bibfnamefont {A.}~\bibnamefont {Hayes}}, \bibinfo {author} {\bibfnamefont {R.}~\bibnamefont {Hill}}, \bibinfo {author} {\bibfnamefont {A.}~\bibnamefont {Ijaz}}, \bibinfo {author} {\bibfnamefont {T.}~\bibnamefont {Isacsson}}, \bibinfo {author} {\bibfnamefont {D.}~\bibnamefont {Ittah}}, \bibinfo {author} {\bibfnamefont {S.}~\bibnamefont {Jahangiri}}, \bibinfo {author} {\bibfnamefont {P.}~\bibnamefont {Jain}}, \bibinfo {author} {\bibfnamefont {E.}~\bibnamefont {Jiang}}, \bibinfo {author} {\bibfnamefont {A.}~\bibnamefont {Khandelwal}}, \bibinfo {author} {\bibfnamefont {K.}~\bibnamefont {Kottmann}},
  \bibinfo {author} {\bibfnamefont {R.~A.}\ \bibnamefont {Lang}}, \bibinfo {author} {\bibfnamefont {C.}~\bibnamefont {Lee}}, \bibinfo {author} {\bibfnamefont {T.}~\bibnamefont {Loke}}, \bibinfo {author} {\bibfnamefont {A.}~\bibnamefont {Lowe}}, \bibinfo {author} {\bibfnamefont {K.}~\bibnamefont {McKiernan}}, \bibinfo {author} {\bibfnamefont {J.~J.}\ \bibnamefont {Meyer}}, \bibinfo {author} {\bibfnamefont {J.~A.}\ \bibnamefont {{Monta{\~n}ez-Barrera}}}, \bibinfo {author} {\bibfnamefont {R.}~\bibnamefont {Moyard}}, \bibinfo {author} {\bibfnamefont {Z.}~\bibnamefont {Niu}}, \bibinfo {author} {\bibfnamefont {L.~J.}\ \bibnamefont {O'Riordan}}, \bibinfo {author} {\bibfnamefont {S.}~\bibnamefont {Oud}}, \bibinfo {author} {\bibfnamefont {A.}~\bibnamefont {Panigrahi}}, \bibinfo {author} {\bibfnamefont {C.-Y.}\ \bibnamefont {Park}}, \bibinfo {author} {\bibfnamefont {D.}~\bibnamefont {Polatajko}}, \bibinfo {author} {\bibfnamefont {N.}~\bibnamefont {Quesada}}, \bibinfo {author} {\bibfnamefont {C.}~\bibnamefont
  {Roberts}}, \bibinfo {author} {\bibfnamefont {N.}~\bibnamefont {S{\'a}}}, \bibinfo {author} {\bibfnamefont {I.}~\bibnamefont {Schoch}}, \bibinfo {author} {\bibfnamefont {B.}~\bibnamefont {Shi}}, \bibinfo {author} {\bibfnamefont {S.}~\bibnamefont {Shu}}, \bibinfo {author} {\bibfnamefont {S.}~\bibnamefont {Sim}}, \bibinfo {author} {\bibfnamefont {A.}~\bibnamefont {Singh}}, \bibinfo {author} {\bibfnamefont {I.}~\bibnamefont {Strandberg}}, \bibinfo {author} {\bibfnamefont {J.}~\bibnamefont {Soni}}, \bibinfo {author} {\bibfnamefont {A.}~\bibnamefont {Sz{\'a}va}}, \bibinfo {author} {\bibfnamefont {S.}~\bibnamefont {Thabet}}, \bibinfo {author} {\bibfnamefont {R.~A.}\ \bibnamefont {{Vargas-Hern{\'a}ndez}}}, \bibinfo {author} {\bibfnamefont {T.}~\bibnamefont {Vincent}}, \bibinfo {author} {\bibfnamefont {N.}~\bibnamefont {Vitucci}}, \bibinfo {author} {\bibfnamefont {M.}~\bibnamefont {Weber}}, \bibinfo {author} {\bibfnamefont {D.}~\bibnamefont {Wierichs}}, \bibinfo {author} {\bibfnamefont {R.}~\bibnamefont
  {Wiersema}}, \bibinfo {author} {\bibfnamefont {M.}~\bibnamefont {Willmann}}, \bibinfo {author} {\bibfnamefont {V.}~\bibnamefont {Wong}}, \bibinfo {author} {\bibfnamefont {S.}~\bibnamefont {Zhang}},\ and\ \bibinfo {author} {\bibfnamefont {N.}~\bibnamefont {Killoran}},\ }\href {https://doi.org/10.48550/arXiv.1811.04968} {\bibinfo {title} {{{PennyLane}}: {{Automatic}} differentiation of hybrid quantum-classical computations}} (\bibinfo {year} {2022}),\ \Eprint {https://arxiv.org/abs/1811.04968} {arXiv:1811.04968 [physics, physics:quant-ph]} \BibitemShut {NoStop}%
\bibitem [{\citenamefont {Roman}(2008)}]{roman2008}%
  \BibitemOpen
  \bibfield  {author} {\bibinfo {author} {\bibfnamefont {S.}~\bibnamefont {Roman}},\ }\href@noop {} {\emph {\bibinfo {title} {Lattices and Ordered Sets}}},\ \bibinfo {edition} {1st}\ ed.\ (\bibinfo  {publisher} {{Springer}},\ \bibinfo {address} {{New York, NY}},\ \bibinfo {year} {2008})\BibitemShut {NoStop}%
\bibitem [{\citenamefont {Bosyk}\ \emph {et~al.}(2019)\citenamefont {Bosyk}, \citenamefont {Bellomo}, \citenamefont {Holik}, \citenamefont {Freytes},\ and\ \citenamefont {Sergioli}}]{bosyk2019}%
  \BibitemOpen
  \bibfield  {author} {\bibinfo {author} {\bibfnamefont {G.~M.}\ \bibnamefont {Bosyk}}, \bibinfo {author} {\bibfnamefont {G.}~\bibnamefont {Bellomo}}, \bibinfo {author} {\bibfnamefont {F.}~\bibnamefont {Holik}}, \bibinfo {author} {\bibfnamefont {H.}~\bibnamefont {Freytes}},\ and\ \bibinfo {author} {\bibfnamefont {G.}~\bibnamefont {Sergioli}},\ }\bibfield  {title} {\bibinfo {title} {Optimal common resource in majorization-based resource theories},\ }\href {https://doi.org/10.1088/1367-2630/ab3734} {\bibfield  {journal} {\bibinfo  {journal} {New Journal of Physics}\ }\textbf {\bibinfo {volume} {21}},\ \bibinfo {pages} {083028} (\bibinfo {year} {2019})}\BibitemShut {NoStop}%
\end{thebibliography}%

\end{document}